\newtheorem{theorem}{Theorem}
\newtheorem{corollary}{Corollary}
\theoremstyle{definition}
\newtheorem{definition}{Definition}
\newcommand*{\algotitle}[2]{%
	\stepcounter{algocf}%
	\hypertarget{algocf.title.\theHalgocf}{}%
	\NR@gettitle{#1}%
	\label{#2}%
	\addtocounter{algocf}{-1}%
}
\crefname{algorithm}{Mechanism}{mech}
\g@addto@macro\bfseries{\boldmath}
\definecolor{darkgreen}{rgb}{0,0.7,0}
\newcommand{\kibitz}[2]{\ifnum\Comments=1\textcolor{#1}{#2}\fi}
\newcommand{\RR}{\mathbb{R}}
\newcommand{\SW}{\text{SW}}
\newcommand{\vv}{\mathbf{v}}
\newcommand{\calV}{\mathcal{V}}
\newcommand{\calA}{\mathcal{A}}
\newcommand{\dist}{\text{dist}}
\newcommand{\gaga}{\text{\sc ThresholdStepFunction}}
\title{\bf A Few Queries Go a Long Way: \\Information-Distortion Tradeoffs in Matching\thanks{\, G.~Amanatidis has been partially supported by the NWO Veni project VI.Veni.192.153. G.~Birmpas has been supported by the ERC Advanced Grant 788893 AMDROMA ``Algorithmic and Mechanism Design Research in Online Markets'', and the MIUR PRIN project ALGADIMAR ``Algorithms, Games, and Digital Markets''. 
}}
\author[1,2]{Georgios Amanatidis}
\author[3]{Georgios Birmpas}
\author[4]{Aris Filos-Ratsikas}
\author[1]{Alexandros A. Voudouris}
\affil[1]{University of Essex, United Kingdom}
\affil[2]{University of Amsterdam, The Netherlands}
\affil[3]{Sapienza University of Rome, Italy}
\affil[4]{University of Liverpool, United Kingdom}
\date{}
\begin{document}
\maketitle

\begin{abstract}
We consider the \emph{one-sided matching} problem, where $n$ agents have preferences over $n$ items, and these preferences are induced by underlying cardinal valuation functions. The goal is to match every agent to a single item so as to maximize the \emph{social welfare}. Most of the related literature, however, assumes that the values of the agents are not a priori known, and only access to the ordinal preferences of the agents over the items is provided. Consequently, this incomplete information leads to loss of efficiency, which is measured by the notion of \emph{distortion}. In this paper, we further assume that the agents can answer a small number of \emph{queries}, allowing us partial access to their values. We study the interplay between elicited cardinal information (measured by the number of queries per agent) and distortion for one-sided matching, as well as a wide range of well-studied related problems. Qualitatively, our results show that with a limited number of queries, it is possible to obtain significant improvements over the classic setting, where only access to ordinal information is given. 
\end{abstract}

\section{Introduction}\label{sec:intro}
In the \emph{one-sided matching} problem (often referred to as the \emph{house allocation} problem),  $n$ agents have \emph{preferences} over a set of $n$ items, and the goal is to find an allocation in which every agent receives a single item, while maximizing some objective. Typically, as well as in this paper, this objective is the (utilitarian) social welfare, i.e., the total utility of the agents. Since its introduction by \citet{hylland1979efficient}, this has been one of the most fundamental problems in the literature of economics (e.g., see \citep{BM:01,svensson1999strategy}), and has also been extensively studied in computational social choice (e.g., see \citep{Klaus2016matching}).

The classic work on the problem (including Hylland and Zeckhauser's seminal paper) assumes that the preferences of the agents are captured by cardinal valuation functions, assigning numerical values to the different items; these can be interpreted as their \emph{von Neuman-Morgenstern utilities} \citep{vnm}. From a more algorithmic viewpoint, one can envision a weighted complete bipartite graph (with agents and items forming the two sides of the partition), where the weights of the edges are given by these values. 
Crucially, most of the related literature assumes that the designer only has access to the preference rankings of the agents over the items (i.e., the \emph{ordinal preferences}) induced by the underlying values, but not to the values themselves.\footnote{\ The pseudo-market mechanism of \citet{hylland1979efficient} is a notable exception to this.} 
This is motivated by the fact that it is fairly standard to ask the agents to simply order the items, while it is arguably much more demanding to require them to specify exact numerical values for all of them. 

This begs the following natural question: \emph{What is the effect of this limited information on the goals of the algorithm designer?}  In 2006, \citeauthor{procaccia2006distortion} defined the notion of \emph{distortion} to measure precisely this effect, when the goal is to maximize the social welfare. Their original research agenda was put forward for settings in general social choice (also referred to as  \emph{voting}), but has since then flourished to capture several different scenarios, including the one-sided matching problem. 
  For the latter problem, \citet{Aris14}, showed that the best possible distortion achieved by any ordinal algorithm is $\Theta(\sqrt{n})$, even if one allows randomization, and even if the valuations are normalized. For deterministic algorithms,  the corresponding bound is $\Theta(n^2)$ (\cref{thm:ordinal}).

While the aforementioned bounds establish a stark impossibility when one has access only to ordinal information, they do not rule out the prospect of good approximations when it is possible to elicit  \emph{some} cardinal information. Indeed, the cognitive burden of eliciting cardinal values in the literature has mostly been considered in the two extremes; either full cardinal information or not at all. Conceivably though, if the agents needed to come up with only \emph{a few} cardinal values, the elicitation process would not be very demanding, while it could potentially have wondrous effects on the social welfare. This approach was advocated recently by \citet{amanatidis2020peeking}, who proposed to study the \emph{tradeoffs} between the number of cardinal \emph{value queries} per agent and distortion. For the general social choice setting of \citet{procaccia2006distortion}, \citet{amanatidis2020peeking} actually showed that with  a limited number of such queries, one can significantly improve upon the existing strong impossibilities \citep{boutilier2015optimal,caragiannis2017subset}.
Motivated by the success of this approach for general social choice settings, we extend this research agenda and aim to answer the following question for the one-sided matching problem:

\begin{quote}
\emph{What are the best possible information-distortion tradeoffs in one-sided matching? Can we achieve significant improvements over the case of only ordinal preferences, by making only a few cardinal value queries per agent?}
\end{quote}

\subsection{Our Contribution}

We consider the one-sided matching problem with the goal of maximizing the social welfare under limited information. We adopt the standard assumption in the related literature that the agents provide as input their ordinal preferences over the items, and that these are induced by their cardinal valuation functions. Following the agenda put forward by \citet{amanatidis2020peeking}, we also assume implicit access to the numerical values of the agents via \emph{value queries}; we may ask for an agent $i$ and an item $j$, and obtain the agent's value, $v_i(j)$, for that item. 

We measure the performance of an algorithm by the standard notion of \emph{distortion}, and our goal is to explore the tradeoffs between distortion and the number of queries we need per agent. As the two extremes, we note that if we use $n$ queries per agent, we recover the complete cardinal valuation profile and thus the distortion is $1$, whereas if we use $0$ queries, i.e., we use only the ordinal information, the best possible distortion is $\Theta(n^2)$ (see \cref{thm:ordinal}). The latter bound  holds even if we consider valuation functions that satisfy the \emph{unit-sum} normalization, i.e., the sum of the values of each agent for all the items is $1$. 
As we mentioned earlier, even when allowing randomization, the best possible distortion is still quite large ($\Theta(\sqrt{n})$ \citep{Aris14}) without employing any value queries. In this work, we only consider deterministic algorithms, 
and leave the study of randomized algorithms for future work.
\medskip

\noindent We provide the following  results: 
\begin{itemize}[leftmargin=15pt,itemsep=6pt,topsep=6pt,parsep=0pt,partopsep=0pt]
	\item In Section~\ref{sec:upper}, we present an algorithm parametrized by $\lambda$, which achieves distortion $O(n^{1/(\lambda+1)})$ by making $O(\lambda \log{n})$ queries per agent. In particular, by setting $\lambda=O(1)$ and  $\lambda=O(\log{n})$ we achieve respectively
	\begin{itemize}[itemsep=4pt,topsep=4pt,parsep=0pt,partopsep=0pt]
		\item distortion $O(\sqrt{n})$ using $O(\log{n})$ queries per agent;
		\item \emph{constant} distortion using $O(\log^2{n})$ queries per agent.
	\end{itemize}
	\smallskip
	The algorithm is inspired by a conceptually similar idea presented by \citet{amanatidis2020peeking} for the social choice setting. In Section~\ref{sec:extensions} we adapt our algorithm to provide analogous information-distortion tradeoffs for a wide range of well-studied optimization problems, including \emph{two-sided matching}, \emph{general graph matching} and the \emph{clearing problem for kidney exchange}.
	
	\item Next, still in \cref{sec:upper}, motivated by the analysis of the class of algorithms mentioned above as well as our lower bounds in \cref{sec:lower}, we consider a  class of instances (coined \emph{$k$-well-structured instances}) that captures the case where the agents (roughly) agree on the ranking of the items.
	We present a simple algorithm achieving distortion $O(k \cdot n^{1/k})$ by making only $k$ queries per agent for these instances. 
	
	
	\item In Section~\ref{sec:lower} we show a lower bound of $\Omega(n^{1/k}/k)$ on the distortion of any algorithm that makes $k$ queries per agent. An immediate consequence of this bound is that it is impossible to achieve constant distortion without asking almost $\log{n}$ queries! When $k$ is a constant, our aforementioned results on $k$-well-structured instances also establish the tightness of our construction, since the proof uses instances of this type. 
	Furthermore, using a construction which exploits the same ordinal but different cardinal information, we show that even under the stronger assumption of unit-sum normalization, the distortion cannot be better than  $\Omega\left(n^{1/(k+1)}/k\right)$ with $k$ queries per agent.
	
	\item In Section~\ref{sec:unit-sum} we present our main algorithmic result for unit-sum valuations, namely a novel algorithm which achieves distortion $O(n^{2/3}\sqrt{\log{n}})$ using only \emph{two} queries per agent. 
	
\end{itemize}
Our results are summarized in Figure~\ref{fig:results}. We note that our upper bounds are robust to ``errors'' in the responses to the queries (albeit not stated this way for the sake of simplicity). As long as the reported values are within a constant multiplicative factor from the true values, qualitatively there is no change in any of our bounds.

\tikzset{every picture/.style={line width=0.75pt}} 
\begin{figure*}
\small
	\centering
	
	\tikzset{every picture/.style={line width=0.75pt}} 
	
\begin{tikzpicture}[x=0.75pt,y=0.75pt,yscale=-0.78,xscale=0.86,every node/.style={scale=0.9}]

\tikzset{every picture/.style={line width=0.75pt}} 

\draw    (0,231) -- (715,231) ;
\draw [shift={(717,231.5)}, rotate = 180.04] [color={rgb, 255:red, 0; green, 0; blue, 0 }  ][line width=0.75]    (10.93,-3.29) .. controls (6.95,-1.4) and (3.31,-0.3) .. (0,0) .. controls (3.31,0.3) and (6.95,1.4) .. (10.93,3.29)   ;
\draw    (115,222) -- (115,237.5) ;
\draw    (682,222) -- (682,237.5) ;
\draw    (205,222) -- (205,237.5) ;
\draw    (535,222) -- (535,237.5) ;
\draw    (616,222) -- (616,237.5) ;
\draw    (310,222) -- (310,237.5) ;
\draw    (425,222) -- (425,237.5) ;
\draw  [color={rgb, 255:red, 74; green, 74; blue, 74 }  ,draw opacity=1 ][fill=white!80!gray  ,fill opacity=0.65 ] (0.5,136) -- (722,136) -- (722,220) -- (0.5,220) -- cycle ;
\draw  [color={rgb, 255:red, 74; green, 74; blue, 74 }  ,draw opacity=1 ][fill=white!70!orange  ,fill opacity=0.5 ] (0.5,58) -- (722,58) -- (722,136) -- (0.5,136) -- cycle ;

\draw (660,246) node [anchor=north west][inner sep=0.75pt]   [align=left] {\,{\footnotesize f}{\small ull info,}\\{\small $n$ queries}};
\draw (83,246) node [anchor=north west][inner sep=0.75pt]   [align=left] {{\small ordinal info,}\\{\small\ \   $0$ queries}};
\draw (96,87) node [anchor=north west][inner sep=0.75pt]    {$\Theta \left( n^{2}\right)$};
\draw (677,174.4) node [anchor=north west][inner sep=0.75pt]    {$1$};
\draw (180,246) node [anchor=north west][inner sep=0.75pt]   [align=left] {{\small  $2$ queries}};
\draw (180,69.4) node [anchor=north west][inner sep=0.75pt]    {$\tilde{O}\left( n^{2/3}\right)$};
\draw (180,148.4) node [anchor=north west][inner sep=0.75pt]    {$O\left(\sqrt{n}\right)$};
\draw (225,155) node [anchor=north west][inner sep=0.75pt]   [align=left] {\textcolor{gray}{\scriptsize [$2$-WS]}};
\draw (180,185) node [anchor=north west][inner sep=0.75pt]    {$\Omega \left(\sqrt{n}\right)$};
\draw (516,265) node [anchor=north west][inner sep=0.75pt]   [align=left] {{\small  queries}};
\draw (504,245) node [anchor=north west][inner sep=0.75pt]    {$O(\lambda \, \log n)$};
\draw (597,265) node [anchor=north west][inner sep=0.75pt]   [align=left] {{\small  queries}};
\draw (587,245) node [anchor=north west][inner sep=0.75pt]    {$O(\log^{2} n)$};
\draw (280,246) node [anchor=north west][inner sep=0.75pt]   [align=left] {{\small \ \ $k$ queries}\\{\small ($k = \Theta(1)$)}};
\draw (395,246) node [anchor=north west][inner sep=0.75pt]   [align=left] {{\small \ \ $k$ queries}\\{\small ($k = \omega(1)$)}};
\draw (396,181) node [anchor=north west][inner sep=0.75pt]    {$\Omega\big( \frac{n^{1/k}}{k}\big)$};
\draw (387,85) node [anchor=north west][inner sep=0.75pt]    {$\Omega\big( \frac{n^{1/(k+1)}}{k}\big)$};
\draw (602,174.4) node [anchor=north west][inner sep=0.75pt]    {$O( 1)$};
\draw (496,172) node [anchor=north west][inner sep=0.75pt]    {$O\left( n^{1/( \lambda +1)}\right)$};
\draw (285,184) node [anchor=north west][inner sep=0.75pt]    {$\Omega \left( n^{1/k}\right)$};
\draw (274,87.4) node [anchor=north west][inner sep=0.75pt]    {$\Omega \left( n^{1/( k+1)}\right)$};
\draw (285,148.4) node [anchor=north west][inner sep=0.75pt]    {$O \left( n^{1/k}\right)$};
\draw (338,155) node [anchor=north west][inner sep=0.75pt]   [align=left] {\textcolor{gray}{\scriptsize [$k$-WS]}};
\draw (390,148.4) node [anchor=north west][inner sep=0.75pt]    {$O \left( k \, n^{1/k}\right)$};
\draw (453,155) node [anchor=north west][inner sep=0.75pt]   [align=left] {\textcolor{gray}{\scriptsize [$k$-WS]}};
\draw (88,174) node [anchor=north west][inner sep=0.75pt]  [font=\small] [align=left] {unbounded};
\draw (4,166) node [anchor=north west][inner sep=0.75pt]  [font=\small,color={rgb, 255:red, 0; green, 0; blue, 0 }  ,opacity=1 ] [align=left] {\textbf{\textit{unrestricted }}\\\textbf{\textit{valuations}}};
\draw (180,102.4) node [anchor=north west][inner sep=0.75pt]    {$\Omega \left( n^{1/3}\right)$};
\draw (9,88) node [anchor=north west][inner sep=0.75pt]  [font=\small,color={rgb, 255:red, 0; green, 0; blue, 0 }  ,opacity=1 ] [align=left] {\textbf{\textit{unit-sum}}\\\textbf{\textit{valuations}}};
\end{tikzpicture}
	
	\caption{An overview of our results. Some of the distortion guarantees hold for $k$-well-structured instances; this is indicated in brackets next to the bound. The upper bounds that hold for unrestricted valuations also obviously hold for unit-sum valuations. All of the lower bounds hold even for instances with the same ordinal preferences for all agents. 
	}
	\label{fig:results}
\end{figure*}

\subsubsection*{Technical Overview}
Our parametrized class of algorithms developed in \cref{sec:upper} is based on the following idea. 
First, every agent is queried for their favorite item. Then, for each agent, the algorithm partitions the items into sets so that the value of the agent for all items in a set is lower-bounded by a carefully defined quantity; these sets are constructed via a sequence of binary search subroutines. Finally, the algorithm outputs a matching that maximizes the social welfare, with respect to the ``simulated'' values obtained via this process.

A similar idea was proposed by \cite{amanatidis2020peeking} for the social choice setting. We remark, however, that if one translates their result directly to our setting (by interpreting matchings as \emph{alternatives} and running their algorithm), it yields inconsequential distortion bounds, as well as an exponential-time running time for the algorithm. To obtain meaningful bounds, the key is to adopt the \emph{principle} of their approach rather than the exact solution proposed. As a matter of fact, in \cref{sec:extensions}, we show that the same principle can be further refined and applied to a plethora of other combinatorial optimization problems on graphs with additive objectives.  
In particular, we show bounds of similar flavor for several well-studied problems, such as
\begin{itemize}[leftmargin=15pt,itemsep=6pt,topsep=6pt,parsep=0pt,partopsep=0pt]
	\item two-sided (perfect) matching (i.e., a (perfect) matching on a bipartite graph with agents on each side);
	\item general graph matching (i.e., a matching on a general graph with agents being the vertices);
	\item general resource allocation (i.e., the allocation of $m$ items to $n$ agents under various constraints);
	\item max $k$-sum clustering (i.e., a generalization of matchings on graphs, see \citep{anshelevich2016blind});
	\item the clearing problem for kidney exchanges (i.e., a cycle cover with restricted cycle length, see \citep{abraham2007clearing}).
\end{itemize}
With regard to our results for $k$-well-structured instances in \cref{sec:upper}, we note that we obtain a notable improvement over the tradeoffs achieved by the aforementioned approach. For instance, a distortion of $O(\sqrt{n})$ is achievable here using only {\em two} queries! To provide some intuition, one can think of $k$-well-structured instances as capturing cases for which there is  a general agreement on the set of the ``most valuable items'', although the agents might rank the items in this set in different ways. 
For example, most researchers in Artificial Intelligence would agree on the top 5 publication venues, although they might rank those 5 venues differently. The parameter $k$ captures the different ``levels'' of agreement.

An interesting class of instances that are $k$-well-structured for every $k$ is that of instances where all the agents have the \emph{same ranking} over all items. These instances are important because intuitively, they highlight the challenge of social welfare maximization under ordinal information. How is an algorithm in that case supposed to distinguish between pairs of high  and low value? Perhaps somewhat surprisingly, it turns out that such instances are more amenable to handling via a smaller number of queries.
In fact, the related literature has been concerned in the past with this type of instances; e.g., \cite{Aris14} used such instances in their lower bound constructions and referred to them as {\em ordered} instances, and \cite{plaut2020almost} and \cite{barman2020approximation} considered such instances in the context of fair division of indivisible items. We also use such instances for our lower bound constructions in \cref{sec:lower}; in that sense, we do not only provide improved upper bounds for an interesting class of instances but we also show the tightness of the analysis for our lower bound constructions.  

We remark that the results of \cref{sec:upper} and \cref{sec:extensions} do not require any normalization assumptions. For agents with unit-sum normalized valuations, in \cref{sec:unit-sum}, we present an algorithm which achieves a distortion of $O(n^{2/3}\sqrt{\log{n}})$ using only two queries per agent. The algorithm is adaptive, and uses the second query differently, depending on the maximum value that it sees after querying all agents at the first position. While this result is not tight based on our lower bound of $\Omega(n^{1/3})$ for this case, we consider it as one of the highlights of this work. In particular, it shows that sublinear distortion is possible even with a deterministic algorithm using a constant number of queries per agent. 

\subsection{Related Work}
The one-sided matching problem in the context of agents with preferences over items was firstly introduced by \cite{hylland1979efficient}. The classic literature in economics (e.g., see \citep{BM:01, svensson1999strategy} and references therein) was mostly concerned with axiomatic properties, and has proposed several solutions and impossibilities; see the surveys of \cite{sonmez2011matching} and \cite{abdulkadiroglu2013matching} for more information. 

The effects of limited information on the social welfare objective were studied most notably in the work of \cite{Aris14} mentioned earlier. Further, a line of work \citep{anshelevich2016blind,anshelevich2016truthful,anshelevich2017tradeoffs,abramowitz2017utilitarians} studied related settings on graphs, and showed distortion bounds for matching problems and their generalizations. 
A crucial difference from our work is that they consider \emph{symmetric} weights on the edges of the graph, which  corresponds to cases where agents are paired with other agents (e.g., matching or clustering) and the value of an agent $i$ for another agent $j$ is the same as the value of $j$ for $i$. In contrast, in the graph problems that we consider, the weights are assumed to be \emph{asymmetric}; the weight of an edge is given by the sum of weights of incident vertices.\footnote{\ \cite{anshelevich2013anarchy} refer to this setting as ``Asymmetric Edge-Labeled Graphs'' as opposed to ``Symmetric Edge-Labeled Graphs'', which is the setting of \citep{anshelevich2016blind} and the other works mentioned above.} This makes the results markedly different. Another important distinction is that most of the aforementioned works operate in the setting where the edge weights satisfy the triangle inequality, whereas we impose no such restriction. \cite{caragiannis2016truthful} study one-side matching settings in metric spaces, and thus their work also falls into this category, but is quite distinct as they focus on cost objectives rather than welfare.

For general social choice settings (i.e., voting), the distortion of ordinal algorithms has been studied in a long list of papers, e.g., see \citep{procaccia2006distortion,anshelevich2017randomized,anshelevich2018approximating,boutilier2015optimal,caragiannis2017subset,benade2017preference,caragiannis2018truthful,fain2018random,filos2014truthful,goel2017metric,munagala2019improved,feldman2016facility,gkatzelis2020resolving}. Most of the related work considers the standard case where only ordinal information is given, with a few notable exceptions \citep{abramowitz2019awareness,benade2017preference,bhaskar2018truthful,filos2020distortion,filos2020approximate}. 

The approach of enhancing the expressiveness of algorithms by equipping them with cardinal queries that we adopt in this paper was first suggested by \cite{amanatidis2020peeking}. It should be noted that this is in nature quite different from another related recent approach proposed by \cite{mandalefficient, mandal2020optimal}, which considers the communication complexity of voting algorithms. In that setting, the algorithm must elicit a limited number of \emph{bits} of information from the agents and it is not assumed that the ordinal preferences are already known. Moreover, the agents are allowed within that number of bits to communicate partial information about all of their different values. 

\section{Model definition}\label{sec:prelim}
We consider the one-sided matching problem, where there is a set of agents $N$ and a set of items $A$, such that $|N|=|A|=n$. Each agent $i \in N$ has a {\em valuation function} $v_i: A \rightarrow \RR_{\geq 0}$ indicating the agent's value for each item; that is $v_i(j)$ is the value of agent $i \in N$ for item $j \in A$. 
The valuation functions we consider are either
\begin{itemize}[itemsep=6pt,topsep=6pt,parsep=0pt,partopsep=0pt]
\item {\em unrestricted}, in which case the values for the items can be any non-negative real numbers, or
\item {\em unit-sum}, in which case the sum of values of each agent $i$ for all items is $1$: $\sum_{j \in A}v_i(j)=1$. 
\end{itemize}
We denote by $\vv=(v_i)_{i \in N}$ the {\em (cardinal) valuation profile} of the agents.  
Let $Y=(y_i)_{i \in N}$ be a {\em matching} according to which each agent $i \in N$ is matched to exactly one item $y_i \in A$, such that $y_i \neq y_{i'}$ for every $i \neq i'$. Given a profile $\vv$, the {\em social welfare} of $Y$, $\SW(Y|\vv)$, is the total value of the agents for the items they are matched to according to $Y$:
\begin{align*}
\SW(Y|\vv) = \sum_{i \in N} v_i(y_i)\,.
\end{align*}
By $\mathcal{M}$ we denote the set of all perfect matchings on our instance. Our goal is to compute a matching $X(\vv) = (x_i)_{i \in N}$ with maximum social welfare, i.e., 
\begin{align*}
X(\vv) \in \arg\max_{Y\in \mathcal{M}} \SW(Y|\vv)\,.
\end{align*}

In case the valuation functions of the agents are known, then computing $X(\vv)$ can be done efficiently, e.g., via the Hungarian method \citep{Kuhn56}.
However, our setting is a bit more restrictive. The exact valuation functions of the agents are their private information, and they can instead report orderings over the items, which are consistent with their valuations. In particular, every agent $i$ reports a ranking of the items $\succ_i$ such that 
$a \succ_i b$ if and only if $v_i(a) \geq v_i(b)$ for all $a,b \in A$. Given a valuation profile $\vv$, we denote by $\succ_\vv = (\succ_i)_{i \in N}$ the {\em ordinal profile} induced by $\vv$; observe that different valuation profiles may induce the same ordinal profile. On top of the ordinal preferences of the agents, we can obtain partial access to the valuation profile, by making a number of value queries per agent. In particular, a {\em value query} takes as input an agent $i \in N$ and an item $j \in A$, and returns the value $v_i(j)$ of agent $i$ for item $j$. This leads us to the following definition of a deterministic algorithm in our setting.

\begin{definition}
A {\em matching algorithm} $\calA_k$ takes as input an ordinal profile $(\succ)_{i \in N}$, makes $k\le n$ value queries per agent and, using $(\succ)_{i \in N}$ as well as the answers to the queries, it computes a matching $\calA_k(\succ) \in \mathcal{M}$. If $k=0$,  $\calA$ is an {\em ordinal} algorithm, whereas if $k=n$, $\calA$ is a {\em cardinal} algorithm.
\end{definition}

As already mentioned, we can efficiently compute the optimal matching using a cardinal algorithm. However, if an algorithm is allowed to make a limited number $k < n$ of queries per agent, the computed matching might not be optimal. The question then is how well does such an algorithm {\em approximate} the optimal social welfare of any matching. Approximation here is captured by the notion of {\em distortion}.

\begin{definition}
The {\em distortion} $\dist(\calA_k)$ of an algorithm $\calA_k$ is the worst-case ratio (over all possible valuation profiles $\vv$) between the social welfare of an optimal matching $X(\vv)$ and the social welfare of the matching computed by $\calA_k$:
\begin{align*}
\dist(\calA_k) = \sup_{\vv} \frac{\SW(X(\vv)|\vv)}{\SW(\calA_k(\succ_\vv)|\vv)}\,.
\end{align*}
\end{definition}

\subsection{Warm-up: Ordinal Algorithms}\label{sec:ordinal}
Before we proceed with our more technical results on tradeoffs between information and distortion, we consider the case of ordinal algorithms. When the valuation functions of the agents are unrestricted, the distortion of any ordinal algorithm is unbounded. To see this, consider any instance that contains two agents who agree on which the most valuable item is.
Since only one of them can be matched to this item, it might be the case that the other agent has an arbitrarily large value for it, leading to unbounded distortion. Even for the more restrictive case of unit-sum valuations, however, the distortion of ordinal algorithms can be quite large.

\begin{theorem}\label{thm:ordinal}
The distortion of the \emph{best} ordinal matching algorithm is $\Theta(n^2)$.
\end{theorem}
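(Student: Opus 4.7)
\emph{Upper bound ($O(n^2)$).} My plan is to analyze Serial Dictatorship: process agents in a fixed priority order, with each agent selecting their top remaining item. The first agent receives their top-ranked item, whose value under unit-sum normalization is at least $1/n$, since $n$ non-negative values summing to $1$ must have maximum at least $1/n$. Therefore, $\SW(\calA_0(\succ_\vv)|\vv) \ge 1/n$. Combined with the trivial upper bound $\SW(X(\vv)|\vv) \le n$ (each agent's value for any single item is at most $1$ under unit-sum, and there are $n$ agents), this immediately yields distortion at most $n^2$.

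\emph{Lower bound ($\Omega(n^2)$).} My plan is to show that for any ordinal algorithm $\calA$, there exists an instance with unit-sum valuations on which its distortion is $\Omega(n^2)$. Since $\calA$'s output depends only on the ordinal profile, the adversary first selects the ordinal profile to constrain $\calA$'s matching, and then picks compatible cardinal values to maximize the distortion. A natural starting point is the profile where all $n$ agents share the identical ranking $a_1 \succ a_2 \succ \cdots \succ a_n$; by the symmetry among agents, $\calA$'s matching is essentially forced by a tie-breaking rule, which up to relabeling I may take to be the identity. The adversary then chooses values so that (i) the agent assigned to $a_1$ has a near-uniform distribution, contributing only $\Theta(1/n)$ to $\SW(\calA_0|\vv)$, while (ii) the remaining agents have values concentrated on items not matched to them by $\calA$, so that an alternative matching achieves welfare $\Omega(n)$.

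\emph{Main obstacle.} Step (ii) is the crux. Under the same-ranking profile, the unit-sum constraint forces $v_i(a_j) \le 1/j$, so for any matching the welfare is at most $H_n = O(\log n)$; a direct same-ranking construction (e.g., one uniform agent and $n-1$ agents concentrated on $a_1$) therefore only gives distortion $\Theta(n)$. To reach the full $\Omega(n^2)$ bound, one must refine the construction—typically by introducing agents whose rankings are slightly perturbed from one another—so that the adversary can place concentrated value on different items across agents and thereby allow OPT to scale linearly in $n$, while $\calA$'s matching is still trapped in items on which agents have value $O(1/n)$. The delicate part is calibrating these values to respect both unit-sum and the prescribed ordinal structure; this careful adversarial construction, in the spirit of Filos-Ratsikas et al.~(2014), is the technical heart of the lower bound.
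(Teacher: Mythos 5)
Your upper bound is correct and essentially identical to the paper's: the paper observes that \emph{any} ordinal algorithm which matches some agent to her top item achieves welfare at least $1/n$ under unit-sum, and the optimal welfare is trivially at most $n$, giving $n^2$; your Serial Dictatorship instantiation is just one such algorithm.

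Your lower bound, however, has a genuine gap: you correctly diagnose that the all-identical-rankings profile cannot work (unit-sum caps the optimum at $H_n = O(\log n)$ there) and that one must perturb the rankings, but you then stop short of producing the construction, explicitly deferring ``the technical heart'' to future work. This is precisely where the proof lives. For completeness, here is the shape of the paper's instance: take $n$ even and items $A = \{a, b_1, \ldots, b_{n/2}, c_1, \ldots, c_{n/2-1}\}$; agents $i$ and $i + n/2$ (for $i \le n/2$) share the ranking $a \succ b_i \succ b_1 \succ \cdots \succ b_{i-1} \succ b_{i+1} \succ \cdots \succ b_{n/2} \succ c_1 \succ \cdots \succ c_{n/2-1}$. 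Every agent tops $a$, but the ``personalized'' second item $s_i = b_{\lceil i \bmod (n/2+1) \rceil}$ varies, and crucially each $b_j$ is second-ranked for exactly two agents. Given the algorithm's output $Y$, the adversary sets: the agent receiving $a$ to have uniform values $1/n$; any agent receiving her second item $s_i$ to have value $1$ on $a$ and $0$ elsewhere; and every remaining agent to have value $1/2$ on $a$, $1/2$ on $s_i$, and $0$ elsewhere. Then $\SW(Y|\vv) = 1/n$ exactly (only the $a$-recipient contributes), while rematching the roughly $n/2 - 1$ agents of the third type to their $s_i$'s yields welfare $\approx n/4$, so the distortion is $\Omega(n^2)$. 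Your high-level plan points in the right direction, but without producing such an instance and verifying the two welfare bounds, the lower bound is not established.
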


\begin{proof}
For the upper bound, consider any algorithm that outputs a matching so that some agent is matched to her top-ranked item. 
As the valuations are unit-sum, this agent must have value at least $1/n$ for this item, and thus the social welfare of the matching computed by the algorithm is in turn also at least $1/n$. Since the value of every agent for any item is at most $1$, the maximum possible social welfare is upper-bounded by $n$, and thus the distortion of the algorithm is at most $n^2$.

For the lower bound, we assume that $n$ is even; our instance can be easily adjusted for odd $n$. We consider an instance with set of items $A = \{a, b_1, \ldots, b_{n/2}, c_1, \ldots, c_{n/2-1}\}$. 
The ordinal profile is such that, for $i\in \{1,\ldots,n/2\}$, agents $i$ and $i+n/2$ have the same ordinal preference $\succ_i$, defined as 
\[
a \succ_i b_i \succ_i b_1 \succ_i \ldots \succ_i b_{i-1} \succ_i b_{i+1} \succ_i \ldots \succ_i b_{n/2} \succ_i c_1 \succ_i \ldots \succ_i c_{n/2-1}\,.
\]
Consider any ordinal algorithm which, given as input this profile, outputs a matching $Y = (y_i)_{i \in N}$. 
We define a valuation profile $\vv$ which is consistent with the ordinal profile, and the value of agent $i \in \{1,\ldots, n\}$ depends on the structure of $Y$. For convenience, let $s_i$ denote the second favorite item of agent $i$, i.e., $s_i = b_{i}$ if $i\le n/2$ and $s_i = b_{i-n/2}$ if $i> n/2$.
\begin{itemize}[itemsep=6pt,topsep=6pt,parsep=0pt,partopsep=0pt]
\item If $y_i = a$, then agent $i$ has value $1/n$ for all items;
\item If $y_i = s_{i}$, then agent $i$ has value $1$ for $a$ and $0$ for every other item;
\item Otherwise, agent $i$ has value $1/2$ for $a$, $1/2$ for $s_{i}$ and $0$ for every other item.
\end{itemize}
Since only one agent can be matched to $a$, and everyone else will be matched to an item of value $0$, 
the social welfare of $Y$ computed by the algorithm is $\SW(Y | \vv) = 1/n$. 
However, observe that there exists a matching $X$ with social welfare $\SW(X | \vv) \approx n/4$. 
In particular, we can go through the agents and match each agent $i \in \{1,\ldots, n\}$ to $s_{i}$ if she is not already matched to $s_i$ or to $a$ in $Y$. This way we will end up with a matching where at least $n/2-1$ agents will have value $1/2$ each for the corresponding items. Our claim about the social welfare follows. Therefore, the distortion of any ordinal algorithm is $\Omega(n^2)$.
\end{proof}

\section{Distortion Guarantees for Unconstrained Valuations}\label{sec:upper}
Here we present $\lambda$-$\gaga$ ($\lambda$-TSF), 
an algorithm that works for any valuation functions. At a high level, for each agent, we do the following. We first query the agent's value for her highest ranked item. Then, we partition the items into $\lambda + 1$ sets, so that the agent's value for all the items in a set is lower-bounded by a carefully selected quantity related to the agent's top value. Based on this partition, we then define a new \emph{simulated valuation function} for the agent, where the value of an item is equal to the lower bound that corresponds to the set the item belongs to. Finally, we compute a maximum weight matching with respect to the simulated valuation functions. Formally:
{
\begin{mdframed}[backgroundcolor=white!90!gray,
	leftmargin=\dimexpr\leftmargin-20pt\relax,
	rightmargin=\dimexpr\rightmargin+5pt\relax
	skipabove=5pt,skipbelow=5pt]
	
	$\lambda$-$\gaga$ ($\lambda$-TSF)\\[-5pt]

\noindent Let 
$\alpha_\ell = n^{-\ell/(\lambda+1)}$ for $\ell \in \{0, ..., \lambda\}$. \\[-5pt]

\noindent For every agent $i \in N$:
\begin{itemize}[itemsep=6pt,topsep=6pt,parsep=0pt,partopsep=0pt]
\item Query $i$ for her top-ranked item $j_i^*$; let $v_i^*$ be this value. 

\item Let $Q_{i,0} = \{j_i^*\}$ and $\tilde{v}_i(j_i^*) = \alpha_0 \cdot v_i^* = v_i^*$. 

\item For every $\ell \in \{1, ..., \lambda\}$, using binary search, compute  
\begin{align*}
Q_{i,\ell} = \left\{j \in A: v_i(j) \in \left[ \alpha_\ell \cdot v_i^*, \alpha_{\ell-1} \cdot v_i^* \right) \right\}
\end{align*}
and let $\tilde{v}_i(j) = \alpha_\ell \cdot v_i^*$ for every $j \in Q_{i,\ell}$.

\item Let $Q_i = \bigcup_{\ell=0}^{\lambda} Q_{i,\ell}$ and set $\tilde{v}_i(j) = 0$ for $j \in A \setminus Q_i$.
\end{itemize}

\noindent Return a matching $Y \in \arg\max_{Z\in\mathcal{M}} \SW(Z | \tilde{\vv})$.
\end{mdframed}}
\smallskip
The next theorem follows (asymptotically) from the more general \cref{thm:binary-general}, which is stated in \cref{sec:extensions} and applies to a number of well-known graph problems. To aid the reader, we also include a self-contained, cleaner proof of \cref{thm:lambdaTSFdistortion} which applies only to one-sided matching and gives a slightly better bound.

\begin{theorem}\label{thm:lambdaTSFdistortion}
$\lambda$-TSF makes $1 + \lambda + \lambda \log{n} $ queries per agent and achieves a distortion of $2 n^{1/(\lambda+1)} $.
\end{theorem}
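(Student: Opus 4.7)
The plan is to first verify the query count and then bound the distortion by comparing the true social welfare of the optimum to the social welfare of the output under the simulated valuations $\tilde{\vv}$. For the query count, each agent contributes $1$ query for the top-ranked item $j_i^*$, and then for each $\ell \in \{1, \dots, \lambda\}$, we run a binary search over the ordinal list of items to locate the last one whose value is at least $\alpha_\ell \cdot v_i^*$ (the upper end $\alpha_{\ell-1} \cdot v_i^*$ was already pinned down by the previous search). Each such search uses at most $\lceil \log n \rceil \le 1 + \log n$ queries, giving a total of $1 + \lambda(1 + \log n) = 1 + \lambda + \lambda \log n$ queries per agent.

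For the distortion, the first step is to record three pointwise relations between $v_i$ and $\tilde v_i$ that follow directly from the construction: (i) for $j \in Q_{i,\ell}$ with $\ell \geq 1$, $v_i(j) < \alpha_{\ell-1} \cdot v_i^* = n^{1/(\lambda+1)} \cdot \tilde v_i(j)$; (ii) $v_i(j_i^*) = \tilde v_i(j_i^*) = v_i^*$; and (iii) for $j \in A\setminus Q_i$, $v_i(j) < \alpha_\lambda \cdot v_i^* = n^{-\lambda/(\lambda+1)} v_i^*$ while $\tilde v_i(j) = 0$. In particular $\tilde{\vv} \leq \vv$ pointwise, so $\SW(Y|\tilde{\vv}) \leq \SW(Y|\vv)$.

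Letting $X = (x_i)_i$ denote an optimal matching, I would then split each term $v_i(x_i)$ according to whether $x_i \in Q_i$ or not, using (i)--(ii) in the first case and (iii) in the second, to get
\begin{equation*}
\SW(X|\vv) \;\leq\; n^{1/(\lambda+1)} \cdot \SW(X|\tilde{\vv}) \,+\, n^{-\lambda/(\lambda+1)} \sum_{i \in N} v_i^*.
\end{equation*}
Since $Y$ maximizes social welfare under $\tilde{\vv}$, the first term is bounded by $n^{1/(\lambda+1)} \SW(Y|\tilde{\vv})$.

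The main (and really the only nontrivial) step is to bound the residual term $\sum_i v_i^*$. The key observation is that for the agent $i^*$ with the largest top value, we can exhibit a feasible matching that assigns $j_{i^*}^*$ to $i^*$ and completes arbitrarily, which witnesses $\SW(Y|\tilde{\vv}) \geq \tilde v_{i^*}(j_{i^*}^*) = \max_i v_i^*$ by optimality of $Y$. Hence $\sum_i v_i^* \leq n \cdot \max_i v_i^* \leq n \cdot \SW(Y|\tilde{\vv})$, and the residual term is at most $n^{1/(\lambda+1)} \SW(Y|\tilde{\vv})$. Adding the two contributions and using $\SW(Y|\tilde{\vv}) \leq \SW(Y|\vv)$ yields $\SW(X|\vv) \leq 2 n^{1/(\lambda+1)} \SW(Y|\vv)$, which is the claimed distortion. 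The only step requiring any care is this last argument connecting $\sum_i v_i^*$ to $\SW(Y|\tilde{\vv})$ through the crude but sufficient bound $\sum_i v_i^* \leq n \max_i v_i^*$; everything else is a direct consequence of the level-set construction and the exponent choice $\alpha_\ell = n^{-\ell/(\lambda+1)}$ designed precisely to balance the two error terms.
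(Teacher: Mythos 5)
Your proposal is correct and follows essentially the same route as the paper's proof: split $\SW(X|\vv)$ according to whether $x_i \in Q_i$, bound the in-$Q_i$ part by $n^{1/(\lambda+1)}\,\SW(X|\tilde{\vv}) \le n^{1/(\lambda+1)}\,\SW(Y|\tilde{\vv})$, and bound the residual via $\sum_i v_i^* \le n\max_i v_i^* \le n\,\SW(Y|\tilde{\vv})$, with $\alpha_\ell = n^{-\ell/(\lambda+1)}$ balancing the two terms. Your presentation is a touch more streamlined (you state the pointwise bound $v_i(j) \le n^{1/(\lambda+1)}\tilde{v}_i(j)$ on $Q_i$ directly rather than further partitioning into level sets $S_\ell$), but the decomposition and the key $\max_i v_i^*$ step are the same.
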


\begin{proof}
Consider an arbitrary valuation profile $\vv$. Let $X=X(\vv)$ be an optimal matching, and $Y$ be the solution returned by $\lambda$-TSF; recall that $Y$ maximizes the social welfare according to the simulated valuation functions $\tilde{v}_i$. Let $S$ be the set of agents $i$ such that $x_i \in Q_i$, and $\overline{S} = N \setminus S$. 
Then, the optimal social welfare can be written as 
\begin{align*}
\SW(X|\vv) &= \sum_{i \in N} v_i(x_i)
= \sum_{i \in \overline{S}} v_i(x_i) + \sum_{i \in S} v_i(x_i).
\end{align*}

We will bound the two terms separately. We begin with the first one:
\begin{align*}
\sum_{i \in \overline{S}} v_i(x_i) 
&<  \alpha_{\lambda}  \sum_{i \in \overline{S}}  v_i^*  
\leq \alpha_{\lambda}  \cdot n \cdot \max_{i \in N}  v_i^* 
\leq \alpha_\lambda \cdot n \cdot \SW(Y|\tilde{\vv}) 
\leq \alpha_\lambda \cdot n \cdot \SW(Y|\vv). 
\end{align*}
The first inequality follows directly by the definition of $Q_i$. 
The second inequality follows since $\overline{S} \subseteq N$.
The third inequality follows since $Y$ maximizes the social welfare according to the profile $\vv$ and the algorithm has queried all agents for their most-preferred items. Finally, the last inequality follows since the simulated values of an agents are lower bounds on her true ones. 

For the second term, let $S_\ell$ be the restriction of $S$ on agents for whom $x_i \in Q_{i,\ell}$, $\ell \in \{0, ..., \lambda\}$. Then,
\begin{align*}
\sum_{i \in S} v_i(x_i)
= \sum_{\ell = 0}^\lambda \sum_{i \in S_\ell} v_i(x_i) \,.
\end{align*}
Now, let us assume that $\lambda>0$; we will deal with the simpler case where $\lambda=0$ later.
By definition, for any $\ell \in \{1, \ldots, \lambda\}$ and any $j \in Q_{i,\ell}$, we have that $v_{i}(j) \leq \alpha_{\ell-1} \cdot v_i^* = \frac{\alpha_{\ell-1}}{\alpha_{\ell}} \cdot \alpha_\ell \cdot v_i^* =  \tilde{v}_{i}(j) / \alpha_1$.  Also, for $Q_{i,0} = \{j_i^*\}$, we have $v_{i}(j_i^*) = \tilde{v}_{i}(j_i^*) \leq \tilde{v}_{i}(j_i^*) / \alpha_1$. Hence, 
\begin{align*}
\sum_{i \in S} v_i(x_i)
&\leq \alpha_1^{-1} \sum_{\ell = 0}^\lambda \sum_{i \in S_\ell} \tilde{v}_i(x_i) 
\leq \alpha_1^{-1} \sum_{i \in N} \tilde{v}_i(x_i) 
\leq \alpha_1^{-1}  \SW(Y|\tilde{\vv}) 
\leq \alpha_1^{-1}  \SW(Y|\vv) \,.
\end{align*}
The second inequality follows by considering all agents. The third inequality follows from the optimality of $Y$ with respect to the simulated valuation functions. Finally, the last inequality follows follows since the simulated values of an agents are lower bounds on her true ones. 

Now we can put everything together:
\begin{align} \label{eq:gaga-bound}
\SW(X|\vv)
\leq \left( \alpha_{\lambda}\cdot n  +  \alpha_1^{-1} \right) \SW(Y|\vv)
= 2n^{1/(\lambda+1)} \cdot  \SW(Y|\vv)  \,, 
\end{align}
and this settles the bound on the distortion when $\lambda>0$. 

When $\lambda=0$, we clearly have that
\[\sum_{i \in S} v_i(x_i) \leq \sum_{i \in S} \tilde{v}_i(x_i) \leq \SW(Y|\vv)\,.\]
Then, the analog of \eqref{eq:gaga-bound} is 
\begin{align*}
\SW(X|\vv)
&\leq \left( \alpha_{\lambda}\cdot n  +  1 \right) \SW(Y|\vv) \nonumber 
= 2n^{1/(\lambda+1)} \cdot  \SW(Y|\vv) \,.
\end{align*}
This concludes the proof.
\end{proof}

By appropriately setting the value of $\lambda$, we obtain several tradeoffs between the distortion and the number of queries per agent. In particular, we have the following statement.

\begin{corollary}
We can achieve
\begin{itemize}[itemsep=6pt,topsep=6pt,parsep=0pt,partopsep=0pt]
\item distortion $O(n)$ by making \emph{one} query per agent;
\item distortion $O(n^{1/k})$ for any constant integer $k$ by making $O(\log{n})$ queries per agent;
\item distortion $O(1)$ by making $O(\log^2{n})$ queries per agent.
\end{itemize}
\end{corollary}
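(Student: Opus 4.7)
The plan is simply to instantiate Theorem~\ref{thm:lambdaTSFdistortion} with three different choices of the parameter $\lambda$ and verify that the resulting query complexity and distortion bounds match the three bullets. Since the theorem provides exact expressions ($1+\lambda+\lambda\log n$ queries, distortion $2n^{1/(\lambda+1)}$), there is no combinatorial work left: the corollary reduces to bookkeeping with logarithms.

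For the first bullet, set $\lambda = 0$. Then $\lambda$-TSF makes $1 + 0 + 0 = 1$ query per agent, and the distortion becomes $2n^{1/1} = O(n)$. For the second bullet, given a constant integer $k \geq 1$, set $\lambda = k-1$. The number of queries is then $1 + (k-1) + (k-1)\log n = O(\log n)$, since $k$ is treated as a constant, and the distortion is $2 n^{1/k} = O(n^{1/k})$. For the third bullet, set $\lambda = \lceil \log n \rceil$. Then the distortion is $2 n^{1/(\lceil \log n\rceil + 1)} \leq 2 n^{1/\log n} = 2 \cdot 2 = O(1)$, while the number of queries is $1 + \lceil \log n\rceil + \lceil \log n\rceil \cdot \log n = O(\log^2 n)$.

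I do not anticipate any real obstacle here. The only minor subtlety is that $\lambda$ must be a nonnegative integer for the algorithm description to make sense (it indexes the sets $Q_{i,\ell}$ and the binary-search rounds), so in the second and third bullets I would take $\lambda = k-1$ and $\lambda = \lceil \log n \rceil$ respectively rather than fractional values. One should also briefly note that the binary-search subroutine referenced in the algorithm runs in $O(\log n)$ queries per set $Q_{i,\ell}$ with $\ell \geq 1$, which is exactly the factor accounted for in the $\lambda \log n$ term of the query bound, so no hidden cost is incurred.
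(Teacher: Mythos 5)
Your proposal is correct and is exactly how the paper derives the corollary: instantiate Theorem~\ref{thm:lambdaTSFdistortion} with $\lambda = 0$, $\lambda = k-1$, and $\lambda = \Theta(\log n)$ respectively, then read off the query count $1+\lambda+\lambda\log n$ and distortion $2n^{1/(\lambda+1)}$. The observation that $n^{1/\log n} = O(1)$ is the only computation needed, and you have it right.
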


\subsection{Well-structured instances}\label{sec:well-structured}

We now consider instances in which the agents exhibit quite similar ordinal preferences.
For any positive integer $k$, we define the class of {\em $k$-well-structured} ($k$-WS) instances. 
Let $\varepsilon \in (0,1)$ be a constant (e.g., $\varepsilon=1/2$). 
In a $k$-WS instance, the set of items can be partitioned into $k+1$ sets $A_1$, \ldots, $A_k$, $A_{k+1}$ such that 
\begin{align*}
|A_1|=1 \text{ \ \ and \ \ } |A_\ell|=\varepsilon \cdot n^{(\ell-1)/k} \text{ for all } \ell \in \{2, \ldots, k\}\,,
\end{align*}
and every agent $i$ has the ordinal preference
\begin{align*}
\langle A_1 \rangle_i \succ_i \langle A_2 \rangle_i \succ_i \ldots \succ_i \langle A_k \rangle_i \succ_i \langle A_{k+1} \rangle_i\,,
\end{align*}
where $\langle A_\ell \rangle_i$ denotes some ordering of the items in set $A_\ell$ which depends on agent $i$; that is, different agents may order the items in $A_\ell$ differently. Observe that an instance in which all agents have the same ranking over the items is a $k$-WS instance for every $k$. We will use such instances in our lower bounds in the next section.

For the class of $k$-WS instances we present a simple algorithm, names {\sc $k$-FixedMaxMatching} ($k$-FMM), which achieves a distortion of $O(k \cdot n^{1/k})$ by making $k$ queries per agent. 
It works as follows:
\smallskip

\begin{mdframed}[backgroundcolor=white!90!gray,
	leftmargin=\dimexpr\leftmargin-20pt\relax,
	rightmargin=\dimexpr\rightmargin+5pt\relax
	skipabove=5pt,skipbelow=5pt]
{\sc $k$-FixedMaxMatching} ($k$-FMM)\\[-5pt]

\noindent For every $\ell \in \{1,\ldots,k\}$, query the value of each agent $i$ for her least-preferred item in $A_\ell$; denote this value by $u_i(\ell)$.\\[-5pt]

\noindent For each $i\in N$, define the simulated valuation function $\tilde{v}_i$: 
\begin{itemize}[itemsep=6pt,topsep=6pt,parsep=0pt,partopsep=0pt]
\item $\tilde{v}_i(j) = u_i(\ell)$ for every $j \in A_\ell$, $\ell \in \{1,\ldots,k\}$;
\item $\tilde{v}_i(j) = 0$ for every $j \in A_{k+1}$.
\end{itemize}
\noindent Return a matching $Y \in \arg\max_Z \SW(Z | \tilde{\vv})$. 
\end{mdframed}
\smallskip

\begin{theorem}
For the class of $k$-well-structured instances with $k \geq 1$, {\sc $k$-FMM} makes $k$ queries per agent and achieves a distortion of $O(k \cdot n^{1/k})$.
\end{theorem}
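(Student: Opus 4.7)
The plan is to compare the optimal welfare $\SW(X|\vv)$ to $\SW(Y|\vv)$ level by level. Partition the agents according to where the optimal matching places them: for $\ell \in \{1,\ldots,k+1\}$, let $S_\ell = \{i \in N : x_i \in A_\ell\}$, noting that $|S_\ell| \le |A_\ell|$. Because the least-preferred item in $A_{\ell-1}$ is ranked strictly above every item in $A_\ell$ by every agent, for each $i \in S_\ell$ with $\ell \ge 2$ we have $v_i(x_i) \le u_i(\ell-1)$. For $\ell=1$, since $A_1$ contains the (common) top item, $v_i(x_i) \le v_i^* = u_i(1)$. Writing $u_i(0):=u_i(1)$ for convenience, the optimal welfare then satisfies
\begin{equation*}
\SW(X|\vv) \;\le\; \sum_{\ell=1}^{k+1}\, \sum_{i \in S_\ell} u_i(\ell-1).
\end{equation*}

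Next I would lower-bound $\SW(Y|\vv)$. Since $\tilde v_i \le v_i$ pointwise, $\SW(Y|\vv) \ge \SW(Y|\tilde\vv)$; and since $Y$ maximizes simulated welfare, it beats any specific comparison matching. For each $m \in \{1,\ldots,k\}$, consider the matching that pairs the $|A_m|$ agents with the largest values of $u_i(m)$ to the $|A_m|$ items of $A_m$ (and completes the rest arbitrarily). Its simulated welfare is at least the sum of the top $|A_m|$ values of $u_i(m)$, so
\begin{equation*}
\SW(Y|\vv) \;\ge\; \SW(Y|\tilde\vv) \;\ge\; \sum_{i \in \mathrm{top}_{|A_m|}} u_i(m),
\end{equation*}
where $\mathrm{top}_{t}$ denotes the $t$ agents with largest $u_i(m)$.

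The third step is a per-level comparison. For each $\ell$, I would bound $\sum_{i \in S_\ell} u_i(\ell-1)$ by the sum of the top $|S_\ell|$ values of $u_i(\ell-1)$ (a worst-case choice of $S_\ell$), then invoke the elementary averaging inequality: for any nonincreasing nonnegative sequence $a_1 \ge a_2 \ge \ldots \ge 0$ and integers $m \ge m' \ge 1$, $\sum_{i=1}^{m} a_i \le (m/m') \sum_{i=1}^{m'} a_i$ (proved by noting $a_{m'+1} \le \frac{1}{m'}\sum_{i=1}^{m'} a_i$). Applying this with $m = |S_\ell| \le |A_\ell|$ and $m' = |A_{\ell-1}|$, combined with the lower bound from the previous paragraph (at level $m = \ell - 1$), gives
\begin{equation*}
\sum_{i \in S_\ell} u_i(\ell-1) \;\le\; \max\!\left(1,\tfrac{|A_\ell|}{|A_{\ell-1}|}\right) \cdot \SW(Y|\vv).
\end{equation*}
Plugging in the sizes $|A_1|=1$, $|A_\ell| = \varepsilon n^{(\ell-1)/k}$ for $\ell=2,\ldots,k$, and $|A_{k+1}| \le n$ yields ratios $O(n^{1/k})$ at every level (with absorbed constants depending on $\varepsilon$). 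Summing over the $k+1$ levels yields $\SW(X|\vv) \le O(k\cdot n^{1/k}) \cdot \SW(Y|\vv)$, as desired. The query count is immediate: we ask each agent $k$ queries, one for the least-preferred item of each $A_\ell$, $\ell \le k$.

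I expect the main obstacle to be the clean execution of the averaging step at the two boundary levels: $\ell=1$, where $|A_0|$ is undefined and one must argue directly that a single agent with the top $v_i^*$ is matched to $j_1^*$ in $Y$, and $\ell=k+1$, where $|S_{k+1}|$ can be as large as $n$ so the ratio $|S_{k+1}|/|A_k|$ must be checked to still be $O(n^{1/k})$. The remaining arithmetic and the verification that $Y$'s optimality under $\tilde\vv$ is correctly exploited across levels (each $m$ yielding its own lower bound) are routine.
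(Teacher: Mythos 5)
Your proof is correct and follows essentially the same strategy as the paper: decompose $\SW(X|\vv)$ by level, bound $v_i(x_i)$ for $i$ matched into $A_\ell$ by $u_i(\ell-1)$, and compare against $\SW(Y|\vv)$ via the top-$|A_{\ell-1}|$ sum of the $u_i(\ell-1)$ values. The paper obtains the per-level ratio by splitting the top-$|A_\ell|$ sum into the first $|A_{\ell-1}|$ terms plus $|A_\ell|$ copies of $\min_{i \in S_{\ell-1}} u_i(\ell-1)$, yielding a factor $1 + |A_\ell|/|A_{\ell-1}|$, whereas your averaging lemma for nonincreasing sequences gives the tighter $|A_\ell|/|A_{\ell-1}|$ directly --- a cleaner way to package the same computation, with identical asymptotics.
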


\begin{proof}
Consider an arbitrary $k$-WS instance with valuation profile $\vv$. 
For $\ell \in \{1,\ldots,k\}$, denote by $S_\ell$ the set of $|A_\ell| = n^{(\ell-1)/k}$ agents with the highest values for the last item in $A_\ell$ (breaking ties arbitrarily). Since there exists a matching of the items in $A_\ell$ to the agents of $S_\ell$, the algorithm maximizes the simulated welfare, and $v_i(j) \geq \tilde{v}_i(j)$ for every agent $i$ and item $j$, the social welfare of the matching $Y$ computed by the algorithm is
\begin{align*}
\SW(Y| \vv) &\geq \sum_{i \in S_\ell} u_{i}(\ell) \\
&\geq n^{(\ell-1)/k} \cdot \min_{i \in S_\ell} u_i(\ell)\,,
\end{align*}
for every $\ell \in \{1,\ldots,k\}$. Observe that since $A_1$ consists of just one item, the inequality for $\ell=1$ can be simplified to 
\[\SW(Y|\vv) \geq \max_i u_i(1)\,.\]
Now, let $X$ be an optimal matching, and denote by $x_j$ the agent matched to item $j \in A$. Then,
\begin{align*}
\SW(X|\vv) 
&= \sum_{\ell=1}^{k+1}  \sum_{j \in A_\ell} v_{x_j}(j) 
=  \sum_{j \in A_1} v_{x_j}(j) +  \sum_{\ell=2}^{k+1} \sum_{j \in A_\ell} v_{x_j}(j) \\
&\leq \max_i u_i(1) +  \sum_{\ell=2}^{k+1} \sum_{j \in A_\ell} u_{x_j}(\ell-1)\,,
\end{align*}
where the inequality follows since $A_1$ consists of just one item, and the values of the agents for every item in $A_\ell$ are at most their values for the last item in $A_{\ell-1}$ (since all items in $A_{\ell-1}$ are ranked higher than the items in $A_\ell$). 
Let us focus on the right-most term of the above expression. We have
\begin{align*}
\sum_{\ell=2}^{k+1} \sum_{j \in A_\ell} u_{x_j}(\ell-1)  
&= \sum_{\ell=2}^{k+1} \sum_{j \in A_\ell} \sum_{x_j \in S_{\ell-1}} u_{x_j}(\ell-1) 
 + \sum_{\ell=2}^{k+1} \sum_{j \in A_\ell}  \sum_{x_j \not\in S_{\ell-1}} u_{x_j}(\ell-1) \,.
\end{align*}
By the definition of $S_{\ell-1}$, for every agent $i \not\in S_{\ell-1}$, it holds that $u_{x_j}(\ell-1) \leq \min_{i \in S_{\ell-1}} u_i(\ell-1)$, and thus
\begin{align*}
\sum_{\ell=2}^{k+1} \sum_{j \in A_\ell} u_{x_j}(\ell-1)  
&\leq  \sum_{\ell=2}^{k+1} \sum_{j \in A_\ell} \sum_{x_j \in S_{\ell-1}} u_{x_j}(\ell-1)
+ \sum_{\ell=2}^{k+1} \sum_{j \in A_\ell} \sum_{x_j \not\in S_{\ell-1}}\!\! \min_{i \in S_{\ell-1}}\!\! u_i(\ell-1)\,.
\end{align*}
Observe that the internal double sum of the first term can sum over at most all agents in $S_{\ell-1}$, while the internal double sum of the second term can sum over at most $|A_\ell|$ agents. Using the fact that $|A_{k+1}| \leq n$ and the lower bounds for the social welfare of the matching computed by the algorithm, we obtain 
\begin{align*}
\sum_{\ell=2}^{k+1} \sum_{j \in A_\ell} u_{x_j}(\ell-1)  
&\leq \sum_{\ell=2}^{k+1} \sum_{i \in S_{\ell-1}} u_i(\ell-1) 
+ \sum_{\ell=2}^{k+1} |A_\ell|  \min_{i \in S_{\ell-1}} u_i(\ell-1) \\
&= \sum_{\ell=2}^{k+1} \sum_{i \in S_{\ell-1}}\!\! u_i(\ell-1) 
 + \sum_{\ell=2}^{k} n^{(\ell-1)/k} \cdot  \min_{i \in S_{\ell-1}} u_i(\ell-1)  + |A_{k+1}| \min_{i \in S_k} u_i(k) \\
&\leq  \sum_{\ell=1}^k \sum_{i \in S_\ell}\! u_i(\ell) 
 + n^{1/k} \sum_{\ell=1}^{k-1} n^{(\ell-1)/k} \cdot \min_{i \in S_\ell} u_i(\ell) + n \cdot \min_{i \in S_k} u_i(k)\\
&\leq (k + k\cdot n^{1/k} ) \cdot \SW(Y | \vv)\,.
\end{align*}
Putting everything together, we obtain the theorem.
\end{proof}

We conclude our discussion on $k$-WS instances with a lower bound $\Omega(n^{1/k})$ on the distortion of the $(k-1)$-TSF algorithm; here, the value of the parameter $\lambda$ is chosen to be $k-1$ because of the structure of $k$-WS instances. Combined together with the $O(k\log{n})$ queries that it requires to operate, we have that when $k$ is sub-logarithmic, the $k$-FMM algorithm presented above matches the distortion of $(k-1)$-TSF on $k$-WS instances, using a factor of $\log{n}$ less queries per agent. 

\begin{theorem}\label{thm:gaga-tight}
For every constant $k \geq 1$, there exists a $k$-well-structured instance such that the distortion of $(k-1)$-TSF when given this instance as input is $\Omega(n^{1/k})$. 
\end{theorem}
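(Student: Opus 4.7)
The plan is to construct a $k$-WS instance whose cardinal values collapse the simulated valuations of $(k-1)$-TSF to a single function that is identical across all agents. Once the simulated profile is agent-independent, the simulated welfare of any perfect matching becomes a constant (namely $\sum_{j \in A}f(j)$), so \emph{every} perfect matching lies in the argmax set; in particular, a ``bad'' matching $Y$ that disagrees with the optimum on the hidden layer $A_{k+1}$ is a possible output of the algorithm, and the gap between its true welfare and the optimum's true welfare is $\Omega(n^{1/k})$.

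Concretely, I would take the $k$-WS partition with $A_1 = \{a\}$, $|A_\ell| = \varepsilon n^{(\ell-1)/k}$ for $\ell \in \{2, \ldots, k\}$, and $|A_{k+1}| = n - 1 - \sum_{\ell=2}^k |A_\ell|$, with $\varepsilon$ a small enough constant that $|A_{k+1}| \geq n/2$. Fix a subset $H \subset N$ of size $|A_{k+1}|$ together with a bijection $\pi \colon H \to A_{k+1}$, and let $I = N \setminus H$ (so $|I| = 1 + \sum_{\ell=2}^k |A_\ell|$). Assign the cardinal values: $v_i(a) = 1$ for every $i$; $v_i(j) = n^{-(\ell-1)/k}$ for every $i$ and every $j \in A_\ell$ with $\ell \in \{2, \ldots, k\}$; for each $i \in H$, $v_i(\pi(i)) = n^{-(k-1)/k} - \delta$ (with $\delta>0$ arbitrarily small) and $v_i(j) = 0$ for $j \in A_{k+1} \setminus \{\pi(i)\}$; and $v_i(j) = 0$ for $i \in I$, $j \in A_{k+1}$. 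One checks that this is consistent with the $k$-WS ordinal structure (each $A_\ell$-layer strictly dominates $A_{\ell+1}$, and within-layer ties can be ordered arbitrarily per agent).

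Since $v_i^* = 1$, the buckets of $(k-1)$-TSF are $Q_{i,0}=\{a\}$ and $Q_{i,\ell}=\{j:v_i(j)\in[n^{-\ell/k},n^{-(\ell-1)/k})\}$ for $\ell \in \{1, \ldots, k-1\}$. By design, every $j \in A_\ell$ with $\ell \in \{2,\ldots,k\}$ lands in $Q_{i,\ell-1}$, while every $j \in A_{k+1}$ has $v_i(j) < n^{-(k-1)/k}$ and hence falls outside every bucket. So every agent's simulated valuation reduces to the same item-function $f$: $f(a)=1$, $f(j) = n^{-(\ell-1)/k}$ for $j \in A_\ell$ ($\ell \in \{2,\ldots,k\}$), and $f(j)=0$ for $j \in A_{k+1}$. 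Thus $\SW(Z|\tilde{\vv}) = \sum_{j \in A} f(j)$ for every $Z \in \mathcal{M}$, and in particular the matching $Y$ that sends $I$ onto $\{a\}\cup A_2 \cup \cdots \cup A_k$ (a valid bijection by the cardinalities) and matches $H$ to $A_{k+1}$ via a derangement of $\pi$ belongs to the argmax set. Its true welfare is $\SW(Y|\vv) = 1 + (k-1)\varepsilon = O(1)$, whereas the matching $X$ that sends $H$ to $A_{k+1}$ via $\pi$ itself achieves $\SW(X|\vv) \geq |A_{k+1}|(n^{-(k-1)/k} - \delta) = \Omega(n^{1/k})$, giving distortion $\Omega(n^{1/k})$.

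The main subtlety --- and the step requiring care --- is justifying that the algorithm actually \emph{can} output $Y$ given that $X$ also lies in the argmax set. Since the pseudocode only demands a matching ``$Y \in \arg\max$'' without specifying tie-breaking, the natural interpretation is that the distortion is the worst-case over argmax choices; equivalently, exploiting the complete symmetry of the construction across agent labels, one can relabel $N$ so that any fixed canonical tie-breaking of the underlying max-weight-matching subroutine selects $Y$ rather than $X$. Either reading delivers the claimed $\Omega(n^{1/k})$ lower bound.
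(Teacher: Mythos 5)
Your construction is genuinely different from the paper's, and the bulk of it is correct, but there is a real gap in the final step that you flag but don't close.

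\textbf{Where you diverge from the paper.} The paper keeps the ordinal profile maximally symmetric (all agents share one fixed ranking), fixes the values that would be \emph{revealed} at queried positions, and then constructs the cardinal profile \emph{adaptively} after the algorithm's output $Y$ is determined: it exploits the fact that binary search never queries the first $|A_\ell|/2$ items in each $A_\ell$ and plants high values ($\alpha_{\ell-2}$) there for agents not assigned those items by $Y$. This sidesteps tie-breaking entirely, because the adversary reacts to whatever $Y$ the algorithm produces. You instead fix the cardinal profile up front, force every $A_{k+1}$ item strictly below the last threshold via the $-\delta$ trick so that the simulated valuations collapse to an agent-independent $f$, and then argue the algorithm ``could'' pick a bad matching.

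\textbf{The gap.} Once $\pi$ and $H$ are fixed in advance, the ordinal profile itself encodes the optimum: each $i\in H$ ranks $\pi(i)$ first within $A_{k+1}$, and this information is part of the algorithm's input. Since $\lambda$-TSF's final step only demands \emph{some} $Y\in\arg\max_Z \SW(Z\mid\tilde\vv)$, an implementation is free to break ties using the ordinal data it already holds; a greedy-by-rank tie-break would then recover $\pi$ exactly and yield distortion $O(1)$ on your instance. So the ``worst-case over argmax choices'' reading is not enough, and the ``relabeling'' remark does not resolve it either: a relabeling of $N$ also permutes the ordinal preferences, so a tie-break that consults them still tracks the relabeled $\pi$. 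What you actually need is the stronger claim that the tie-break depends \emph{only} on $\tilde\vv$ (which is $\pi$-independent here), in which case $Y_0$ is a fixed matching and you can define $H$ and $\pi$ \emph{after} seeing $Y_0$, taking $H=\{i : Y_0(i)\in A_{k+1}\}$ and $\pi$ a derangement of $Y_0|_H$. That adaptive choice of $\pi$ is the missing step; stating it (and the assumption that the last line of the pseudocode is a black-box max-weight matching on $\tilde\vv$) would make the argument go through. The paper's adaptive-adversary route, by contrast, is robust to arbitrary tie-breaking, including ordinal-aware rules, which is why it is the safer choice here.
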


\begin{proof}
Consider an instance in which all agents have the same ordinal preference over the items, which can be easily seen to be a $k$-WS instance for every $k \geq 1$. Let us now define the cardinal values which are revealed when the $(k-1)$-TSF algorithm queries the agents:
\begin{itemize}[itemsep=6pt,topsep=6pt,parsep=0pt,partopsep=0pt]
\item $\alpha_{\ell-1} = n^{-(\ell-1)/k}$ for queries about items in $A_\ell$, $\ell\le k$;
\item $0$ for queries about items in $A_{k+1}$. 
\end{itemize}
Because the rankings of the agents for the items and the revealed information due to the queries of the algorithm are the same among the agents, the algorithm will define the same simulated valuation function for all agents. In particular, based on the revealed values, we have that $Q_{i,\ell-1} = A_\ell$ for every $\ell \in \{1,\ldots,k\}$. By considering a valuation profile $\vv$ according to which an agent that is matched to an item in set $A_\ell$ has value $\alpha_{\ell-1}$ for it, we have that the social welfare of the matching $Y$ computed by the algorithm is
\begin{align*}
\SW(Y|\vv) 
&= \sum_{\ell=1}^{k}  |A_\ell| \cdot \alpha_{\ell-1} 
= 1 + \varepsilon \sum_{\ell=2}^k n^{(\ell-1)/k} \cdot  n^{-(\ell-1)/k} 
= 1 +\varepsilon \cdot (k-1) \leq  k\,.
\end{align*}
Now, observe that when binary search is restricted to run over only the items in the set $A_\ell$, $\ell \in \{2, \ldots, k+1\}$, it does not query all the items in $A_\ell$; in particular, because of the way binary search operates, the value of the first $|A_\ell|/2$ items therein will never be revealed. Hence, even if the algorithm matches the agents to items for which they have not been queried for, there must exist a matching $X$ such that for every $\ell \in \{2, \ldots, k+1\}$ the first $|A_\ell|/2$ items of $A_\ell$ are matched to agents different than the ones chosen by the algorithm, who have not been queried for their values. By setting the real values of the agents for these items to be $\alpha_{\ell-2}$, and observing that $|A_{k+1}|= \xi n$ for some constant $\xi  \in (0,1)$, we have
\begin{align*}
\SW(X|\vv) &\geq \sum_{\ell=2}^{k+1} \frac{|A_\ell|}{2}\cdot \alpha_{\ell-2}  
&= \frac{\varepsilon}{2}\sum_{\ell=2}^k n^{(\ell-1)/k} \cdot  n^{-(\ell-2)/k}
+ \frac{\xi}{2} \cdot n \cdot n^{-(k-1)/k} 
\geq  \frac{\min\{\varepsilon,\xi\}}{2}\cdot k \cdot n^{1/k}.
\end{align*}
Hence, the distortion is $\Omega(n^{1/k})$.
\end{proof}

\section{Lower bounds}\label{sec:lower}

In this section we show unconditional lower bounds for algorithms for one-sided matching which are allowed to make at most $k \geq 1$ queries per agent. We  
present a generic matching instance which can be fine-tuned to yield lower bounds for both unrestricted and unit-sum valuation functions. Let $\calV$ denote any of these two classes of valuation functions.  

Let $\delta_\calV(k) \leq 1/k$ be a function of $k$, and $\varepsilon \in (0,1/2)$ be some constant. 
We want to define an instance in which the $n$ items are partitioned into $k+2$ sets 
$A_1, ..., A_{k+1}, B = A \setminus \big( \bigcup_{\ell \in [k+1]}A_\ell \big)$
such that 
\[|A_\ell| = \varepsilon \cdot n^{(\ell-1) \delta_\calV(k)}.\]
Note that because we have restricted the possible values of $\delta_\calV(k)$ to be at most $1/k$ and have chosen $\varepsilon < 1$, these sets of items can be defined; in particular,  $|A_{k+1}|\leq \varepsilon n$. 
We assume that $n$ is large enough so that $n > 2\sum_{\ell=1}^{k+1} |A_\ell|$ and that is such that the cardinalities are indeed integers; the latter is only assumed to simplify the notation. 
We use $\langle T \rangle$ to denote some fixed arbitrary  ranking of the elements of set $T$ (which is common for all agents). Given that, we define the ordinal preference of every agent $i \in N$ to be
\[
\langle A_1 \rangle \succ_i \langle A_2 \rangle \succ_i ... \succ_i \langle A_k \rangle \succ_i \langle A_{k+1} \rangle \succ_i \langle B \rangle.
\]
We reveal the following information, depending on the queries of the algorithm:
\begin{itemize}[itemsep=6pt,topsep=6pt,parsep=0pt,partopsep=0pt]
\item For every $\ell \in \{1,\ldots,k+1\}$, any query for some item in $A_\ell$ reveals a value of $|A_\ell|^{-1}\cdot n^{-\delta_\calV(k)}$;
\item Every query for some item in $B$ reveals a value of $0$.
\end{itemize}
Observe that for $\delta_\calV(k)=1/k$, we have actually defined a $k$-well-structured instance in which all agents have the same ordinal preference. As we will see below, this choice of $\delta_\calV$ in fact yields the best lower bound if $\calV$ is the class of unrestricted valuations. However, when $\calV$ is the class of unit-sum valuations, we will have to choose $\delta_\calV$ differently.

Next, we define two types of conditional valuation functions that an agent $i$ may have, depending on the behavior of the algorithm. These functions have to be consistent to the information that is revealed by the queries of the algorithm.
Let $\xi \in (0,1]$ be some constant. \medskip

\noindent 
(T1) If there exists $r \in \{1,\ldots,k+1\}$, such that $i$ is {\em not} queried for any item in $A_r$ and she {\em does not} get an item from $A_r$ either, then $i$'s values are
\begin{itemize}[itemsep=6pt,topsep=6pt,parsep=0pt,partopsep=0pt]
\item at least $\xi \cdot |A_{r-1}|^{-1} \cdot n^{-\delta_\calV(k)}$ for each item in $A_r$ if $r \geq 2$;
\item at least $\xi$ for the item in $A_1$ if $r=1$;
\item $|A_\ell|^{-1}\cdot n^{-\delta_\calV(k)}$ for every item in $A_{\ell}$, for  $\ell\in \{1,\ldots,k+1\}\setminus \{r\}$;
\item $0$ for every item in $B$.
\end{itemize}\smallskip

\noindent (T2) If $i$ is queried for some item in $k$ different sets out of $A_1,\ldots,A_{k+1}$, 
then her values are 
\begin{itemize}[itemsep=6pt,topsep=6pt,parsep=0pt,partopsep=0pt]
\item $|A_\ell|^{-1}\cdot n^{-\delta_\calV(k)}$ for every item in $A_\ell$, for $\ell\in \{1,\ldots,k+1\}$;
\item at most $|A_{k+1}|^{-1}\cdot n^{-\delta_\calV(k)}$ for every item in $B$.
\end{itemize}\smallskip
Observe that the conditions specified in (T1) and (T2) capture all possible cases about the queries of the algorithm and the possible assignments of the items to the agents. 

\begin{theorem}\label{thm:V-lower}
Let $\calV$ be the class of unrestricted or unit-sum valuation functions. 
If there exists a function $\delta_\calV(k) \leq 1/k$ such that it is possible to define valuation functions in $\calV$ of types \emph{(T1)} and \emph{(T2)}, 
the distortion of any matching algorithm which makes $k$ queries per agent is $\Omega\big( \frac{1}{k}\cdot n^{\delta_\calV(k)} \big)$.
\end{theorem}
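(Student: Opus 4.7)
The plan is an adversary argument. First, I would feed any algorithm $\calA_k$ the ordinal profile and the query answers prescribed above and let $Y$ be the matching it returns. Then, based on $Y$ and the queried (agent, item) pairs, I would commit to a valuation profile $\vv \in \calV^n$ by labelling each agent as type (T1) or (T2): whenever there exists an index $r \in \{1,\ldots,k+1\}$ such that agent $i$ is neither queried in $A_r$ nor matched to an item of $A_r$, I would make $i$ a type (T1) agent with witness $r_i = r$; otherwise I would make her type (T2). The hypothesis of the theorem guarantees that valuations in $\calV$ realizing both types exist.

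The key observation I would establish next is that every agent $i$ matched to an item of $B$ must be of type (T1): she has only $k$ queries but there are $k+1$ sets $A_1,\ldots,A_{k+1}$, so some $A_r$ is unqueried, and since $y_i \in B$, $i$ is not matched to $A_r$ either. Using the construction's assumption $n > 2\sum_\ell |A_\ell|$, the matching $Y$ assigns more than $n/2$ agents to items of $B$, and all of these are type (T1). For $r \in \{1,\ldots,k+1\}$, let $s_r$ count the type (T1) agents whose witness is $r$; then $\sum_r s_r > n/2 > \sum_r |A_r|$, and pigeonhole delivers an index $r^*$ with $s_{r^*} \geq |A_{r^*}|$.

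With $r^*$ in hand, I would construct an alternative perfect matching $X$ by picking $|A_{r^*}|$ type (T1) agents with witness $r^*$, matching them bijectively to the items of $A_{r^*}$, and completing $X$ arbitrarily. Each such agent contributes at least $\xi \cdot |A_{r^*-1}|^{-1}\cdot n^{-\delta_\calV(k)}$ to $\SW(X|\vv)$ (or simply $\xi$ when $r^* = 1$), and since $|A_r|/|A_{r-1}| \geq \varepsilon \cdot n^{\delta_\calV(k)}$, a short computation gives $\SW(X|\vv) \geq \xi\varepsilon = \Omega(1)$. On the other hand, every agent matched to an item of $A_\ell$ in $Y$ has true value exactly $|A_\ell|^{-1}\cdot n^{-\delta_\calV(k)}$ for that item under both type labels, while every agent matched to $B$ has value $0$ (all being type (T1)). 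Summing item by item, $\SW(Y|\vv) \leq (k+1)\cdot n^{-\delta_\calV(k)}$, and dividing the two bounds yields a distortion of $\Omega\bigl(n^{\delta_\calV(k)}/k\bigr)$.

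The step I expect to require the most care, although in this statement it is absorbed into the hypothesis, is confirming that the type (T1) and (T2) conditional valuations can be extended to functions in $\calV$ that are simultaneously consistent with the revealed query answers and with the fixed ordinal preference (and satisfy normalization when $\calV$ is the unit-sum class). This is exactly what will pin down the admissible values of $\delta_\calV(k)$ for each specific class and must be verified separately from the argument above.
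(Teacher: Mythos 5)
Your proof is correct, and it takes a genuinely different route from the paper's. The paper's argument is iterative: it first shows that at most one agent can be unqueried for $A_1$ (and she must receive the $A_1$ item), then considers the agents unqueried for $A_2$ and not matched into $A_2$, shows there are fewer than $|A_2|$ of them (else done), and so on; after $k$ rounds this yields a set $L_k$ of agents who were queried in each of $A_1,\ldots,A_k$, hence exhaust their $k$ queries without touching $A_{k+1}$, and at least $|A_{k+1}|$ of them are not matched into $A_{k+1}$, giving the desired type-(T1) block with witness $r = k+1$. Your argument sidesteps the induction entirely: you observe that every agent matched into $B$ has at least one unqueried $A_r$ she is also not matched to (she makes only $k$ queries against $k{+}1$ sets), so all $|B| > n/2$ such agents are (T1), and a single pigeonhole over the witness labels against the budget $\sum_r |A_r| < n/2$ produces the index $r^*$. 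Both proofs yield $\Omega(1)$ welfare for the alternative matching versus $(k+1)n^{-\delta_\calV(k)}$ for the algorithm's output. Your route is shorter and arguably cleaner; the paper's route has the cosmetic advantage of pinpointing the last set $A_{k+1}$ explicitly, which ties in nicely with the intuition that no agent can afford to query all $k+1$ blocks, but that intuition is exactly what your single observation about agents in $B$ packages up front. A small bookkeeping remark: the ratio $|A_r|/|A_{r-1}|$ equals $n^{\delta_\calV(k)}$ exactly for $r\ge 3$ and $\varepsilon n^{\delta_\calV(k)}$ for $r=2$ (since $|A_1|=1$ rather than $\varepsilon$), so your lower bound of $\varepsilon\xi$ on the welfare of $X$ is the correct worst case; the paper's write-up, incidentally, glosses over the $r=2$ case and states $\xi$ for all $r\ge 2$, so your constant is actually the more careful one.
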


\begin{proof}
Observe that if the values of the agents for all items in a set $A_\ell$ are consistent to the revealed values, then the total value for all items in set $A_\ell$ is equal to $|A_\ell| \cdot |A_\ell|^{-1}\cdot n^{-\delta_\calV(k)} = n^{-\delta_\calV(k)}$. Since there are valuation functions of types (T1) and (T2), we can indeed define a valuation profile $\vv$ so that the value of every agent for the item she is matched to by the algorithm is exactly the value she would reveal if she was queried for it. So, the items of every set $A_\ell$, $\ell \in \{1,\ldots,k+1\}$, contribute exactly $n^{-\delta_\calV(k)}$ to the social welfare of the matching $Y$ computed by the algorithm, while the items in $B$ contribute $0$. Hence, 
\begin{align*}
\SW(Y|\vv) = (k+1) \cdot n^{-\delta_\calV(k)}. 
\end{align*}
Hence, to show the desired bound on the distortion of the algorithm, it suffices to show that there is always a matching with social welfare $\Omega(1)$.

This is clearly the case when there exists an agent who is not queried for the item in $A_1$ and is not given this item, since her value for it in such a case can be set to be at least $\xi$ using a function of type (T1) for $r=1$. Therefore, since this item can only be given to one agent, in the following we assume that there is {\em at most one} agent who is not queried for it, and if such an agent exists, she must be given the item. 

Consider the set $S_2$ of agents who are not queried for any item in $A_2$ and also do not get any item in $A_2$. 
If $|S_2| \geq |A_2|$, then by defining the valuation function of every agent in $S_2$ to be of type (T1) for $r=2$, we can obtain a matching (in which exactly $|A_2|$ agents in $S_2$ are given a different item from $A_2$) with social welfare at least $|A_2| \cdot \xi \cdot |A_1|^{-1} \cdot n^{-\delta_\calV(k)} = \xi$. The latter follows by the fact that $|A_t| \cdot |A_{t-1}|^{-1} = n^{\delta_\calV(k)}$ for all $t\ge 2$. 
Consequently, it must be $|S_2| < |A_2|$. This further implies that there are at least $n-1 - 2|A_2|$ agents who are queried for some item in each of $A_1$ and $A_2$, and do not get any item in $A_1\cup A_2$. Let $L_2$ be the set of these agents.

Consider the set $S_3 \subseteq L_2$ of agents who are not queried for any item in $A_3$ and also do not get any item in $A_3$. 
Like in the case of $A_3$ and $S_3$ above, if $|S_3| \geq |A_3|$, then by defining the valuation function of every agent in $S_3$ to be of type (T1) for $r=3$, we can obtain a matching (in which $|A_3|$ of the agents in $S_3$ are given a different item from $A_3$) with social welfare at least $|A_3| \cdot \xi \cdot |A_2|^{-1} \cdot n^{-\delta_\calV(k)} = \xi$. So, it must be $|S_3| < |A_3|$, which implies that there are $|L_2|-|A_3|-|S_3| \geq |L_2|-2|A_3|$ agents in $L_2$ who are queried for some item in each of $A_1$, $A_2$ and $A_3$, and do not get any item in $A_1\cup A_2\cup A_3$. Let $L_3$ be the set of these agents.

By induction, this process leads to the existence of the set $L_k$ of agents who have been queried for some item in each of $A_1$, \ldots, $A_k$,  such that $|L_k| \geq n - 1 - 2\sum_{\ell=1}^k |A_\ell|$. Since 
\[
n > 2\sum_{\ell=1}^{k+1} |A_\ell| \,\Rightarrow\, 
n - 1 - 2\sum_{\ell=1}^k |A_\ell| - |A_{k+1}| \geq |A_{k+1}|\,,
\]
there are $|A_{k+1}|$ agents who have not been queried for any item in $A_{k+1}$ and do not get any item in $A_{k+1}$. Thus, by setting their valuation functions to be of type (T1) for $r=k+1$, we can construct a matching with welfare at least $\xi$, completing the proof.
\end{proof}

\cref{thm:V-lower} is actually quite powerful and allows us to prove lower bounds for both unrestricted and unit-sum valuation functions. In particular, it reduces the problem to finding the largest possible $\delta_\calV(k) \leq 1/k$, such that valuation functions in $\calV$ of types (T1) and (T2) can be defined.

\begin{theorem}\label{thm:lower-unconstrained} 
For unconstrained valuation functions, the distortion of any matching algorithm which makes $k$ queries per agent is $\Omega\big(\frac{1}{k} \cdot n^{1/k}\big)$.
\end{theorem}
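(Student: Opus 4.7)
The plan is to invoke Theorem~\ref{thm:V-lower} with the largest admissible parameter, $\delta_\calV(k) = 1/k$, which directly delivers the target lower bound $\Omega\bigl(n^{1/k}/k\bigr)$. Since valuations are completely unconstrained (no normalization), the only remaining task is constructive: to exhibit valuation functions of types (T1) and (T2) consistent with both the common ordinal ranking $\langle A_1 \rangle \succ \langle A_2 \rangle \succ \ldots \succ \langle A_{k+1} \rangle \succ \langle B \rangle$ and the values revealed by the algorithm's queries, namely $|A_\ell|^{-1} n^{-1/k} = \varepsilon^{-1} n^{-\ell/k}$ for queries in $A_\ell$ and $0$ for queries in $B$.

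For type (T2), I would set $v_i(j) = |A_\ell|^{-1} n^{-1/k}$ for every $j \in A_\ell$, $\ell \in \{1, \ldots, k+1\}$, and $v_i(j) = 0$ for every $j \in B$. These values agree with all query responses, respect the (T2) cap $|A_{k+1}|^{-1} n^{-1/k}$ on items in $B$, and are strictly decreasing across the groups $A_1, A_2, \ldots, A_{k+1}$ (since they equal $\varepsilon^{-1} n^{-\ell/k}$), so they are consistent with the ordinal preference.

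For type (T1) with distinguished index $r$, I would fix the constant $\xi = 1$ and set $v_i(j) = |A_\ell|^{-1} n^{-1/k}$ for items in each $A_\ell$ with $\ell \neq r$, $v_i(j) = 0$ for items in $B$, and, for items in $A_r$, use $v_i(j) = |A_{r-1}|^{-1} n^{-1/k}$ when $r \geq 2$ or $v_i(j) = 1$ when $r = 1$. Since the algorithm, by assumption, queries no item in $A_r$ for agent $i$, these $A_r$-values are not pinned down by the revealed information and may be chosen freely; they meet the (T1) lower bounds with $\xi = 1$. Ordinal consistency holds because the value on $A_r$ either equals the value on $A_{r-1}$ (for $r \geq 2$) or is $1$, and in both cases strictly exceeds the value on $A_{r+1}$ for $n$ sufficiently large; weak ties between $A_{r-1}$ and $A_r$ are admissible under the convention $a \succ_i b \iff v_i(a) \geq v_i(b)$.

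The main obstacle is not technical depth but careful bookkeeping across the boundary indices $r=1$ and $r=k+1$, where the recipe degenerates slightly, and confirming simultaneously that the assigned values (a) match the fixed query responses, (b) satisfy all (T1) lower bounds with $\xi = 1$, and (c) are monotone across $A_1, \ldots, A_{k+1}, B$. Once these checks are in place, Theorem~\ref{thm:V-lower} applied with $\delta_\calV(k) = 1/k$ immediately yields $\dist = \Omega\bigl(n^{1/k}/k\bigr)$, completing the proof.
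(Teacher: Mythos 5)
Your proposal is correct and follows essentially the same route as the paper's proof: invoke \cref{thm:V-lower} with $\delta_\calV(k) = 1/k$ and exhibit explicit type-(T1) and type-(T2) valuation functions consistent with the revealed values. Your bookkeeping of the $|A_\ell|^{-1}$ factors and the choice $\xi = 1$ is, if anything, a little more careful than the paper's, but the underlying argument is identical.
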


\begin{proof}
It is straightforward to observe that it is indeed possible to define unconstrained valuation functions of types (T1) and (T2) for the function $\delta_\infty(k) = 1/k$. In particular, the valuation functions are such that
\begin{align*}
u_1(j) =
\begin{cases}
n^{-(r-1)/k}, & \text{ if } j \in A_r \\
n^{-\ell/k}, & \text{ if } j \in A_{\ell}, \ell\neq r \\
0, & \text{ if } j \in B
\end{cases}
\end{align*}
and
\begin{align*}
u_2(j) =
\begin{cases}
n^{-\ell/k}, & \text{ if } j \in A_{\ell}\\
0, & \text{ if } j \in B
\end{cases}
\end{align*}
where $\ell$ is a generic index, while $r$ in the definition of $u_1$ is an index that corresponds to a set $A_r$ such that $i$ is not queried for any item in it and she does not get an item from it either.

Hence, by \cref{thm:V-lower}, any matching algorithm has distortion $\Omega\big(\frac{1}{k} \cdot n^{1/k}\big)$.
\end{proof}

For unit-sum valuations, we have the following bound.

\begin{theorem}\label{thm:lower-unit-sum} 
Let $\xi\in(0,1)$ be a constant. For unit-sum valuation functions, the distortion of any matching algorithm which makes $k \le (1-\xi) n^{1/(k+1)}$ queries per agent is $\Omega\big(\frac{1}{k} \cdot n^{1/(k+1)}\big)$.
\end{theorem}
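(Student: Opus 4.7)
The plan is to instantiate the generic framework of \cref{thm:V-lower} by taking $\delta_{\text{u.s.}}(k) = 1/(k+1)$; this trivially satisfies $\delta_{\text{u.s.}}(k) \leq 1/k$ and, if admissible, would yield exactly the bound $\Omega(n^{1/(k+1)}/k)$ claimed. The substance of the argument therefore reduces to verifying that, at this value of $\delta$, unit-sum valuation functions of types (T1) and (T2) can be realized consistent with the revealed values of the instance.

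Type (T2) is essentially routine: the items in $A_1,\dots,A_{k+1}$ account for exactly $(k+1)\,n^{-1/(k+1)} < 1$ of the unit sum, and the remainder is absorbed by the items of $B$, each of which may carry value up to $|A_{k+1}|^{-1}\,n^{-1/(k+1)}$; since $|B| = \Theta(n)$ and $|A_{k+1}| = \varepsilon\, n^{k/(k+1)}$, the total capacity of $B$ is a positive constant, so this is feasible. The main obstacle lies in producing (T1) for \emph{every} $r \in \{1,\dots,k+1\}$. My approach is to observe that the fixed blocks $A_\ell$ with $\ell \neq r$ contribute exactly $k\,n^{-1/(k+1)}$ to the sum, so the values on $A_r$ must total precisely $1 - k\,n^{-1/(k+1)}$. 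Two inequalities bracket this target. On the low side, the per-item bound $\xi \cdot |A_{r-1}|^{-1}\,n^{-1/(k+1)}$ forces a minimum total of $\xi \cdot (|A_r|/|A_{r-1}|) \cdot n^{-1/(k+1)} = \xi$, using $|A_t|/|A_{t-1}| = n^{1/(k+1)}$ from the construction. On the high side, the ordinal constraint $\langle A_{r-1}\rangle \succ_i \langle A_r\rangle$ caps each value in $A_r$ at $|A_{r-1}|^{-1}\,n^{-1/(k+1)}$, which caps the total at $1$. Requiring the target $1-k\,n^{-1/(k+1)}$ to sit in $[\xi, 1]$ translates \emph{exactly} into the hypothesis $k \le (1-\xi)\,n^{1/(k+1)}$, and a uniform assignment of $(1-k\,n^{-1/(k+1)})/|A_r|$ per item of $A_r$ then satisfies both bounds simultaneously; the case $r=1$ is simpler since $A_1$ contains a single item, which can just be set to value $1-k\,n^{-1/(k+1)} \geq \xi$.

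With (T1) and (T2) in hand, the stated bound follows directly from \cref{thm:V-lower}. The heuristic reason the exponent slides from $1/k$ in the unrestricted \cref{thm:lower-unconstrained} to $1/(k+1)$ here is that the unit-sum constraint forces the $k+1$ revealed-value blocks to consume $(k+1)\,n^{-\delta}$ of the total mass, so $\delta$ can no longer be pushed as high as $1/k$ without violating feasibility of (T1); the hypothesis on $k$ is precisely what keeps the target sum for $A_r$ inside the feasible window $[\xi, 1]$.
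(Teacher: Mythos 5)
Your proposal is correct and follows essentially the same route as the paper's own proof: both set $\delta = 1/(k+1)$, verify (T2) by pushing the leftover unit-sum mass onto $B$, and verify (T1) by placing a uniform value $(1 - k n^{-1/(k+1)})/|A_r|$ on the items of $A_r$, which is exactly the paper's explicit $u_1$; your "feasibility window" framing is a cleaner way to see where the hypothesis $k \le (1-\xi) n^{1/(k+1)}$ comes from (though note the upper end of the window, $\le 1$, is automatic — the hypothesis is driven purely by the lower end $\ge \xi$).
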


\begin{proof}
Let $\varepsilon\in(0,1/2)$ be the constant used for defining the sets $A_1, \ldots, A_{k+1}$.
We will show that for $\delta_1(k) = 1/(k+1)$, the following two valuation functions satisfy the unit-sum normalization  and are of types (T1) and (T2), respectively. Then, the statement will follow by \cref{thm:V-lower} by substituting $\delta_1(k)$. 
The functions are defined as 
\begin{align*}
u_1(j) =
\begin{cases}
\frac{1 - k\cdot n^{-1/(k+1)}}{\varepsilon n^{(r-1)/(k+1)}}, & \text{ if } j \in A_r \\
\varepsilon^{-1}n^{-\ell/(k+1)}, & \text{ if } j \in A_\ell, \ell\neq r \\
0, & \text{ if } j \in B
\end{cases}
\end{align*}
and
\begin{align*}
u_2(j) =
\begin{cases}
\varepsilon^{-1}n^{-\ell/(k+1)}, & \text{ if } j \in A_\ell \\
\frac{1-(k+1)\cdot n^{-1/(k+1)}}{|B|}, & \text{ if } j \in B
\end{cases}
\end{align*}
where $\ell$ is a generic index, while $r$ in the definition of $u_1$ is an index that corresponds to a set $A_r$ such that $i$ is not queried for any item in it and she does not get an item from it either.

First, let us verify that both functions satisfy the unit-sum assumption. 
By the definition of $\delta_1(k)$, we have that $|A_\ell| = \varepsilon n^{(\ell-1)/(k+1)}$ for all $\ell \in \{1,\ldots,k+1\}$. Therefore,
\begin{align*}
\sum_{j \in A} u_1(j) 
&= |A_r| \cdot \frac{1 - k\cdot n^{-1/(k+1)}}{\varepsilon n^{(r-1)/(k+1)}} 
+ \sum_{\ell = 1}^{k+1} |A_\ell| \cdot \varepsilon^{-1} n^{-\ell/(k+1)}  - |A_{r}| \cdot\varepsilon^{-1} n^{-{r}/(k+1)}\\
&= 1 - k\cdot n^{-1/(k+1)} + k\cdot n^{-1/(k+1)} = 1,
\end{align*}
and
\begin{align*}
\sum_{j \in A} u_2(j) 
&= \sum_{\ell = 1}^{k+1}  |A_\ell| \cdot \varepsilon^{-1} n^{-\ell/(k+1)} 
+ |B| \cdot \frac{1-(k+1)\cdot n^{1/(k+1)}}{|B|} \\
&= (k+1) \cdot n^{-1/(k+1)} + 1-(k+1)\cdot n^{1/(k+1)} 
= 1.
\end{align*}
Next, we will show that $u_1$ is of type (T1). It suffices to show that the values satisfy the corresponding conditions. We have
\begin{itemize}[itemsep=6pt,topsep=6pt,parsep=0pt,partopsep=0pt]
\item 
For every item $j \in A_r$, if $r \geq 2$:
\begin{align*}
u_1(j) &= \frac{1 - k\cdot n^{-1/(k+1)}}{\varepsilon n^{(r-1)/(k+1)}} 
= \varepsilon^{-1}\big(n^{-(r-1)/(k+1)} - k\cdot n^{-r/(k+1)}\big) \\
&\geq \varepsilon^{-1} \cdot \xi \cdot n^{-(r-1)/(k+1)} 
= \varepsilon^{-1}\cdot \xi \cdot n^{-(r-2)/(k+1)} \cdot n^{-1/(k+1)} 
= \xi \cdot |A_{r-1}|^{-1} \cdot n^{-\delta_1(k)}.
\end{align*}
\item 
For the item $j \in A_1$: $u_1(j) = \varepsilon^{-1}\big(1 - k\cdot n^{-1/(k+1)}\big) \geq \varepsilon^{-1}\xi \ge \xi$.

\item 
For every item $j \in A_\ell$, $\ell \neq r$:  
\begin{align*}
u_1(j) 
&= \varepsilon^{-1}n^{-\ell/(k+1)} 
= \varepsilon^{-1} n^{-(\ell-1)/(k+1)}\cdot n^{-1/(k+1)} 
= |A_\ell|^{-1}\cdot n^{-\delta_1(k)}.
\end{align*}

\item For every item $j \in B$: $u_1(j) = 0$.
\end{itemize}
Finally, we will show that $u_2$ is of type (T2). Similarly to the above case, we have
\begin{itemize}[itemsep=6pt,topsep=6pt,parsep=0pt,partopsep=0pt]
\item 
For every item $j \in A_\ell$:
\begin{align*}
u_2(j) 
&= \varepsilon^{-1} n^{-\ell/(k+1)} 
= \varepsilon^{-1} n^{-(\ell-1)/k} \cdot n^{-1/(k+1)} 
= |A_\ell|^{-1}\cdot n^{-\delta_1(k)}.
\end{align*}

\item 
For every item $j \in B$:
\begin{align*}
u_2(j) &= \frac{1-(k+1)\cdot n^{-1/(k+1)}}{|B|} 
= \frac{1-(k+1)\cdot n^{-1/(k+1)}}{ n - \sum_{\ell=1}^{k+1} n^{(\ell-1)/(k+1)}} \\
&\leq  \frac{1-(k+1)\cdot n^{-1/(k+1)}}{n - (k+1)n^{k/(k+1)}} 
= n^{-1} 
= \varepsilon\cdot\varepsilon^{-1}\cdot n^{-k/(k+1)} \cdot n^{-1/(k+1)} \\
&= \varepsilon\cdot|A_{k+1}|^{-1}\cdot n^{-\delta_1(k)} \le |A_{k+1}|^{-1}\cdot n^{-\delta_1(k)}\,,
\end{align*}
where the first inequality follows by the fact that $n^{x/(k+1)}$ is increasing in $x$. \hfill $\qedhere$
\end{itemize}
\end{proof}

By appropriately setting the value of $k$ in \cref{thm:lower-unconstrained} and \cref{thm:lower-unit-sum}, we  establish that it is impossible to achieve constant distortion without an almost logarithmic number of queries.

\begin{corollary}\label{cor:lower}
Any matching algorithm allowed to make at most
\begin{itemize}[itemsep=6pt,topsep=6pt,parsep=0pt,partopsep=0pt]
\item a constant number $k$ queries per agent has distortion $\Omega(n^{1/k})$ when the valuation functions are unrestricted, and $\Omega(n^{1/(k+1)})$ when they are unit-sum;
 
\item $o \big( \frac{\log{n}}{\log\log{n}} \big)$ queries per agent has  distortion $\omega(\log\log{n})$.
\end{itemize}
\end{corollary}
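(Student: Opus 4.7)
The plan is to deduce both bullets by specializing \cref{thm:lower-unconstrained} and \cref{thm:lower-unit-sum} to the relevant regime of $k$, so no new construction is required.

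For the first bullet, with $k = O(1)$ the factor $1/k$ appearing inside the $\Omega(\cdot)$ of each theorem is a positive constant and can be absorbed into the asymptotic notation. This immediately gives a distortion of $\Omega(n^{1/k})$ in the unrestricted case and $\Omega(n^{1/(k+1)})$ in the unit-sum case. The side condition $k \le (1-\xi)\, n^{1/(k+1)}$ of \cref{thm:lower-unit-sum} is trivially satisfied for any fixed $k$ once $n$ is large enough.

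For the second bullet, I plan to apply \cref{thm:lower-unit-sum}, since it is the stronger of the two statements (any lower bound proved using unit-sum profiles is automatically a lower bound in the unrestricted case, as unit-sum profiles form a subclass of unrestricted ones, hence the supremum defining distortion can only grow). Fix an arbitrary function $k = k(n) = o(\log n / \log\log n)$. The side condition $k \le (1-\xi)\, n^{1/(k+1)}$ holds eventually, since $n^{1/(k+1)}$ will turn out to be superpolylogarithmic in $n$ while $k$ is at most $\log n$. The theorem then yields distortion $\Omega\bigl(n^{1/(k+1)}/k\bigr)$. To estimate this, write
\[
n^{1/(k+1)} \;=\; \exp\!\bigl((\log n)/(k+1)\bigr),
\]
and note that $(\log n)/(k+1) = \omega(\log\log n)$ by our assumption on $k$, so $n^{1/(k+1)} = \exp\bigl(\omega(\log\log n)\bigr)$, which grows faster than any fixed power of $\log n$. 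Since $k \le \log n$ for all sufficiently large $n$, it follows that $n^{1/(k+1)}/k \ge (\log n)^c / \log n = (\log n)^{c-1}$ for every constant $c$ and all $n$ large enough (depending on $c$); choosing $c=2$ already gives a distortion bound of $\Omega(\log n) = \omega(\log\log n)$, as required.

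The main (and essentially only) obstacle is calculational: verifying that the superlogarithmic growth of $(\log n)/k$ implies the superpolylogarithmic growth of $n^{1/k}$, and hence a distortion that is $\omega(\log\log n)$. This is handled by the exponentiation step above, and one only needs to be mildly careful to check the applicability condition of \cref{thm:lower-unit-sum} in the chosen regime of $k$, which is immediate.
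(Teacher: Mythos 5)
Your argument is correct and is essentially the intended (implicit) derivation: both bullets follow by specializing Theorems~\ref{thm:lower-unconstrained} and~\ref{thm:lower-unit-sum} to the relevant regime of $k$, absorbing the $1/k$ factor for constant $k$ and, for $k = o(\log n / \log\log n)$, noting that $n^{1/(k+1)} = \exp((\log n)/(k+1))$ is superpolylogarithmic so $n^{1/(k+1)}/k = \omega(\log\log n)$. The side-condition check and the observation that a unit-sum lower bound subsumes the unrestricted case are both handled correctly.
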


\section{Two Queries for Unit-sum Valuations}\label{sec:unit-sum}

In this section, we present the \textsc{FirstPositionAdaptive} algorithm (FPA), 
which makes at most two queries per agent and achieves a distortion of $O(n^{2/3}\sqrt{\log{n}})$, when the valuation functions are unit-sum.  
First, we query each agent for their most-preferred item. 
Then, depending on whether the \emph{maximum revealed value} by these queries is at least $n^{-1/3}$, we query the agents for items that are parts of ``large enough'' partial matchings. 
Otherwise, we query everyone at a specific position, and define \emph{simulated values} based on the answers to these queries, ensuring that these values are lower bounds on the corresponding true values. Clearly, the simulated valuation functions are not necessarily unit-sum. 

For the sake of the presentation, we assume that $n$ is a perfect cube, that is, $n=\alpha^3$ for some $\alpha\in \mathbb N$; it is straightforward to extend our analysis to the case where this is not true.

\medskip

{
\begin{mdframed}[backgroundcolor=white!90!gray,
	leftmargin=\dimexpr\leftmargin-20pt\relax,
	rightmargin=\dimexpr\rightmargin+5pt\relax
	skipabove=5pt,skipbelow=5pt]
\textsc{FirstPositionAdaptive} (FPA)\\[-5pt]

\noindent All agents are initially {\em active}.\\[-5pt]

\noindent For every agent  $i$, query $i$ for her top item $j_i^*$; let $v_i^*$ be its value.\\[-5pt]

\noindent If $\max_{i \in N} v_i^* \geq n^{-1/3}$, then:
\begin{itemize}[itemsep=6pt,topsep=6pt,parsep=0pt,partopsep=0pt]
\item For every $\ell \in [n]$, \emph{while} there exists a partial matching $Y_p$ of size 
$|Y_p|\geq n^{1/3}/\sqrt{\log{n}}$ consisting of active agents $i$ matched to items $y_i$ such that agent $i$ ranks item $y_i$ at some position $\ell' \le \ell$, query every $i$ in $Y_p$ for item $y_i$ and make these agents \emph{inactive}.
If necessary, break ties arbitrarily.


\item Output a matching $Y$ that maximizes the social welfare, according to the revealed values due to the above queries.
\end{itemize}

\noindent Else (i.e., $\max_{i \in N} v_i^* < n^{-1/3}$):
\begin{itemize}[itemsep=6pt,topsep=6pt,parsep=0pt,partopsep=0pt]
\item For every agent $i$, query $i$ for the item she ranks at position $n^{1/3}+1$; let $u_i$ be this value.

\item For every agent $i$, define the simulated valuation function $\tilde{v}_i$:
\begin{itemize}[itemsep=4pt,topsep=4pt,parsep=0pt,partopsep=0pt]
\item $\tilde{v}_i(j_i{^*}) = v_i^{*}$;
\item $\tilde{v}_i(j) = u_i$ for every item $j$ that $i$ ranks at position $\ell \in \{2, \ldots, n^{1/3}+1\}$;
\item $\tilde{v}_i(j) = 0$ for every item $j$ that $i$ ranks at position $\ell \in \{n^{1/3}+2, \ldots, n\}$.
\end{itemize}

\item For every agent $i$ such that $u_i < \frac{1}{2} n^{-1}$, modify $\tilde{v}_i$ so that:
\begin{itemize}[itemsep=4pt,topsep=4pt,parsep=0pt,partopsep=0pt]
\item $\tilde{v}_i(j) = \frac{1}{3} n^{-1/3}$ for every item $j$ that $i$ ranks at position $\ell \in \{2, \ldots, \frac{1}{4} n^{1/3} \}$.
\end{itemize}
\item Output a matching $Y \in \arg\max_{Z} \SW(Z | \tilde{\vv})$.
\end{itemize}
\end{mdframed}
}

\bigskip

The distortion of the algorithm is given by the following theorem.

\begin{theorem}\label{thm:dist_fpa}
For unit-sum valuation functions, the distortion of FPA is $O(n^{2/3}\sqrt{\log{n}})$.
\end{theorem}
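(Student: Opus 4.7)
The proof naturally splits along the algorithm's conditional test on $\max_i v_i^*$. I would handle Case~1 ($\max_i v_i^* \geq n^{-1/3}$) via a combinatorial argument on the extracted partial matchings, and Case~2 via analysis of the simulated valuations, with Case~2 being the main technical challenge.

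For Case~1, set $r = n^{1/3}/\sqrt{\log n}$. Let $X$ be an optimal matching, $p_i$ the rank of $x_i$ in agent $i$'s preference, and $S_\ell = \{i : p_i = \ell\}$. I would partition $S_\ell = S_\ell^{\text{in}} \cup S_\ell^{\text{out}}$, where $S_\ell^{\text{in}}$ collects the agents queried during processing of some position $\ell' \leq \ell$. The first key step is that by the termination of the inner while loop at position $\ell$, the maximum bipartite matching between still-active agents and their top-$\ell$ items has size less than $r$; since the restriction of $X$ to $S_\ell^{\text{out}}$ is such a matching, $|S_\ell^{\text{out}}| < r$. Coupled with unit-sum (giving $v_i(x_i) \leq 1/\ell$ for $i \in S_\ell$), the total ``out'' contribution to $\SW(X|\vv)$ is $O(r \log n) = O(n^{1/3}\sqrt{\log n})$. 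The second key step is that for $i \in S_\ell^{\text{in}}$, the queried item $y_i$ has rank at most $\ell = p_i$, hence $v_i(y_i) \geq v_i(x_i)$. Since the extracted partial matchings $M_1, \ldots, M_T$ have pairwise disjoint agent sets and $T \leq n/r$, taking the maximum gives a single matching feasible for $Y$ of weight at least $(r/n) \sum_\ell \sum_{i \in S_\ell^{\text{in}}} v_i(x_i)$. Combined with $\SW(Y|\vv) \geq n^{-1/3}$ (from the top query at the maximizer), this yields the desired distortion bound.

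For Case~2, the crucial observation is that $\SW(X|\vv) \leq \sum_i v_i^* < n \cdot n^{-1/3} = n^{2/3}$, so it suffices to show $\SW(Y|\vv) = \Omega\bigl(\SW(X|\vv)/(n^{2/3}\sqrt{\log n})\bigr)$. I would partition the agents into $H = \{i : u_i \geq 1/(2n)\}$ and $L = N \setminus H$. For $i \in H$, the simulated values on ranks $2, \ldots, n^{1/3}+1$ are true lower bounds on $v_i$. For $i \in L$, the condition $u_i < 1/(2n)$ together with unit-sum forces the true values on the top $n^{1/3}+1$ positions to sum to more than $1/2$; since $v_i$ is non-increasing, the top $n^{1/3}/4$ of these positions carry a constant fraction of this mass, so the modified simulated value $\tfrac{1}{3} n^{-1/3}$ matches the average true value on those positions up to a constant. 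I would then split the contribution of $\SW(X|\vv)$ by the type of $i$ and the rank of $x_i$: for $H$-agents and for $L$-agents with $p_i > n^{1/3}/4$, direct simulated-value comparisons suffice since $\tilde{v}$ is a true lower bound in those cases; for $L$-agents with $p_i \in [2, n^{1/3}/4]$, an averaging/Hall-type rematching argument replaces the pointwise comparison of true vs.\ simulated values.

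The main obstacle is the last part of Case~2: showing that the pointwise overestimation of $\tilde{v}_i$ for $L$-agents at ranks in $[2, n^{1/3}/4]$ does not cause $\SW(Y|\tilde{\vv})$ to exceed $\SW(Y|\vv)$ by more than the desired factor. The plan is to argue that the $L$-agents collectively have enough true mass concentrated on top-ranked items that, by a matching/Hall argument, a near-perfect rematching of $L$-agents to their top-$n^{1/3}/4$ items exists with true welfare $\Omega(1/\sqrt{\log n})$; since $Y$ maximizes simulated welfare and a maximum-weight matching in the simulated graph is feasible in the true graph, $Y$ inherits at least a comparable amount of true welfare once combined with the $H$-part of the analysis.
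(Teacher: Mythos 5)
Your Case~1 analysis matches the paper's argument in all essentials: you partition the optimal matching's agents by whether their second query targets an item ranked at least as high as $x_i$ (the paper's sets $H$ and $L$, your ``in''/``out''), bound the ``out'' part via the termination condition of the while loop and the unit-sum bound $v_i(x_i)\le 1/p_i$ summed harmonically, and bound the ``in'' part by averaging over the $O(n^{2/3}\sqrt{\log n})$ disjoint partial matchings. That is the paper's proof.

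For Case~2 the skeleton is right (split on $u_i\lessgtr\tfrac12 n^{-1}$), but you have introduced a problem that does not exist, and it distracts you from the actual remaining work. You worry that the override $\tilde v_i(j)=\tfrac13 n^{-1/3}$ at ranks $2,\dots,\tfrac14 n^{1/3}$ is a ``pointwise overestimation'' for agents in $R$ and propose an averaging/Hall rematching to control the damage. In fact the paper proves that $\tilde v_i$ remains a \emph{pointwise lower bound}: since each of the top $\tfrac14 n^{1/3}-1$ values is below $n^{-1/3}$, the mass $\ge\tfrac12$ in the top $n^{1/3}$ positions forces the \emph{tail} positions $\tfrac14 n^{1/3},\dots,n^{1/3}$ to carry at least $\tfrac14$, hence the value at rank $\tfrac14 n^{1/3}$ is at least $\tfrac13 n^{-1/3}$, and by monotonicity so is every value at a smaller rank. (Your statement that ``the top $n^{1/3}/4$ positions carry a constant fraction'' has the quantifier backwards; the bound comes from the tail.) With $\tilde v_i\le v_i$ everywhere, the argument is the direct one: Hall's theorem yields a matching of $\min\{|R|,\tfrac14 n^{1/3}\}$ agents of $R$ to items in their top $\tfrac14 n^{1/3}$ ranks, so $\SW(Y|\vv)\ge\SW(Y|\tilde\vv)\ge\tfrac13 n^{-1/3}\min\{|R|,\tfrac14 n^{1/3}\}$, and a two-way case on $|R|$ finishes the $R$-term. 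You would still need to spell out the $N\setminus R$ analysis, which the paper again splits by whether $x_i$ is ranked within the top $n^{1/3}$ positions, with a further case on $|L|\lessgtr n^{1/3}$ for the remainder; also note that Case~2 in the paper yields $O(n^{2/3})$ with no $\sqrt{\log n}$ factor, so your target $\Omega(1/\sqrt{\log n})$ is not what is actually needed there.
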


\begin{proof}
Let $\vv$ be a valuation profile. Denote by $Y$ the output of the algorithm when given as input the ordinal profile $\succ_\vv$, and by $X=(x_i)_{i \in N}$ an optimal matching for $\vv$. We consider two main cases, depending on the value $\max_{i \in N} v_{i}^*$ that the algorithm learns with the first query. 

\bigskip

\noindent \underline{Case 1: $\max_{i \in N} v_{i}^* \geq n^{-1/3}$.}\ \ \ 
The algorithm makes a second query to an agent $i$ for some item $j$ only if the pair $(i,j)$ is part of a partial matching of size at least $n^{1/3}$, involving only active agents, i.e., agents who have not been included in such a partial matching in any previous step. 
Let $Y_1, \ldots, Y_\lambda$ be all the partial matchings considered throughout the execution of the algorithm. 
By definition, each such partial matching contains at least $n^{1/3}/\sqrt{\log{n}}$ agents and  an agent is contained in at most one of these matchings. Thus, it holds that $\lambda < n^{2/3}\sqrt{\log{n}}$. 

We partition the agents into two sets.
The set $H$ of agents $i$ that are queried only for items they rank \emph{at least as high} as the item $x_i$ they receive in the optimal matching $X$. Some agents in $H$ are possibly queried twice for their best item.
The set $L$ of agents $i$ that are queried for an item they rank \emph{lower} than $x_i$ or are queried only \emph{once}.
We can write the social welfare of $X$ as
\begin{align*}
\SW(X|\vv) = \sum_{i \in H} v_i(x_i) + \sum_{i \in L} v_i(x_i) \,.
\end{align*}
We will bound each term on the right-hand side separately. 
For the first term, we have:
\begin{align*}
\sum_{i \in H} v_i(x_i) &\leq \sum_{i \in H} v_i(y_i) \leq \sum_{t=1}^\lambda \sum_{i \in Y_t} v_i(y_i) 
\le \lambda \max_t \!\sum_{i \in Y_t} v_i(y_i) < n^{2/3} \sqrt{\log{n}} \cdot \SW(Y|\vv)\,.
\end{align*}
The first inequality holds because $y_i \succcurlyeq_i x_i$ for every  $i \in H$. 
The second inequality holds because the agents in $H$ are queried only if they are included in one of the partial matchings $Y_1, \ldots, Y_\lambda$. 
The last inequality follows from the bound on $\lambda$ established above, and the fact that $\max_t \sum_{i \in Y_t} v_i(y_i)$ is trivially upper bounded by the social welfare of $Y$.

To bound the second term, let $Y^{(\ell)}$ be the restriction of $Y$ containing only the agents $i \in L$ for whom $x_i$ is at position $\ell$. 
It holds that $|Y^{(\ell)}| < n^{1/3}/\sqrt{\log{n}}$, or else the algorithm would have queried the agents in $Y^{(\ell)}$ for their optimal items, contradicting their membership in $L$. So, we get that
\begin{align*}
\sum_{i \in L} v_i(x_i) &= \sum_{\ell =1}^{n} \sum_{i \in Y^{(\ell)}} v_i(x_i) < \sum_{\ell =1}^{n} n^{1/3} \sqrt{\log{n}} \, \frac{1}{\ell} 
< \frac{n^{1/3}}{\sqrt{\log{n}}} 2\log n = 2n^{1/3}\sqrt{\log{n}} \,,
\end{align*}
where the first inequality follows from the unit-sum normalization; in particular, any agent's value for an item at position $\ell$ is at most $1/\ell$. The second inequality is a simple bound on the harmonic numbers: 
$\sum_{i =1}^{n} i^{-1}  < 2\log_2 n$, for $n\ge 2$.

Further, since $\max_{i \in N} v_{i}^* \geq n^{-1/3}$, we have that $\SW(Y|\vv) \geq n^{-1/3}$. Thus,
\begin{align*}
\sum_{i \in L} v_i(x_i) \leq 2 n^{2/3}\sqrt{\log{n}}\cdot \SW(Y|\vv) \,.
\end{align*}
Putting everything together, the distortion of the algorithm in this case is at most $2 n^{2/3}\sqrt{\log{n}}$.

\bigskip

\noindent \underline{Case 2: $\max_{i \in N} v_{i}^* < n^{-1/3}$.}\ \ \ 
We partition the set of agents into two sets, depending on whether their value for the item they rank at position $n^{1/3}+1$ is at most $\frac{1}{2}n^{-1}$. In particular, let $R = \{ i\in N: u_i < \frac{1}{2}n^{-1}\}$. We can write the optimal social welfare of $X$ as
\begin{align*}
\SW(X | \vv) = \sum_{i \in R} v_i(x_i) + \sum_{i \in N\setminus R} \!v_i(x_i) \,.
\end{align*}
We will bound each term separately. 
For the first term, since $\max_{i \in N} v_{i}^* < n^{-1/3}$, we clearly have that
\begin{align*}
\sum_{i \in R} v_i(x_i) \leq  \max_{i \in N} v_{i}^*|R| \leq   n^{-1/3}|R| \,.
\end{align*}
Consider an arbitrary agent $i \in R$ and denote by $j_{i,\ell}$ the item she ranks at position $\ell$; hence, $j_i^* = j_{i,1}$. 
We will first show that $v_i\big(j_{i,\frac{1}{4}n^{1/3}}\big) \geq \frac{1}{3}\cdot n^{-1/3} = \tilde{v}_i\big(j_{i,\frac{1}{4}n^{1/3}}\big)$.
Since $u_i = v_i\big(j_{i,n^{1/3}+1}\big) < \frac{1}{2} n^{-1}$, we have that
\begin{align*}
\sum_{\ell = n^{1/3}+1}^n v_i(j_{i,\ell}) \leq (n-n^{1/3}-1)  u_i < \frac{1}{2} \,,
\end{align*}
and thus, by the unit-sum normalization assumption, we also have that
\begin{align*}
\sum_{\ell = 1}^{n^{1/3}} v_i(j_{i,\ell}) \geq \frac{1}{2} \,.
\end{align*}
Since $v_i(j_{i,\ell}) \leq v_i(j_{i,1}) < n^{-1/3}$ for every $\ell \in \big\{1, \ldots, \frac{1}{4}n^{1/3}-1\big\}$ and $v_i\big(j_{i,\frac{1}{4}n^{1/3}}\big) \geq v_i(j_{i,\ell})$ for every $\ell \in \big\{ \frac{1}{4}n^{1/3}, \allowbreak \ldots, n^{1/3} \big\}$, we obtain
\begin{align*}
v_i\big(j_{i,\frac{1}{4}n^{1/3}}\big) &\geq \frac{\frac{1}{2} - \big(\frac{1}{4}n^{1/3}-1\big) n^{-1/3} }{ \frac{3}{4}n^{1/3} } \geq \frac{1}{3} n^{-1/3} = \tilde{v}_i\big(j_{i,\frac{1}{4}n^{1/3}}\big) \,.
\end{align*}
where the second inequality is a matter of simple calculations.
So, all the agents in $R$ have value at least $\frac{1}{3} n^{-1/3}$ for the items they rank at positions up to $\frac{1}{4}n^{1/3}$. This implies that the simulated valuation functions, defined by the algorithm, are lower bounds to the real valuation functions.

By Hall's Theorem \cite{hall1935representatives}, it is easy to see that there exists a matching of size $\min\big\{|R|,\frac{1}{4}n^{1/3}\big\}$ where each agent in $R$ is matched to an  item she  ranks at the first $\frac{1}{4}n^{1/3}$ positions. Moreover, $Y$ maximizes the social welfare according to the simulated valuation functions. Hence,
\begin{align*}
\SW(Y | \vv) \geq \SW(Y | \tilde{\vv}) \geq \frac{1}{3}n^{-1/3} \min\Big\{|R|,\frac{1}{4}n^{1/3}\Big\} \,.
\end{align*}
If $|R| <  \frac{1}{4}n^{1/3}$, then $\SW(Y | \vv) \geq \frac{1}{3} |R| n^{-1/3}$, and thus
\begin{align*}
\sum_{i \in R} v_i(x_i) \leq 3 \cdot \SW(Y | \vv) \,.
\end{align*}
Otherwise, $\SW(Y | \vv) \geq 1/12$, and since $|R| \leq n$, we obtain 
\begin{align*}
\sum_{i \in R} v_i(x_i) \leq 12 n^{2/3} \cdot \SW(Y | \vv) \,.
\end{align*}

For the second term, we further partition  $N\setminus R$ into two sets depending on the position of the $x_i$s. 
In particular, $H$ is the set of agents $i \in N\setminus R$ who rank $x_i$ at some position $\ell \leq n^{1/3}$, and 
$L$ is the set  of remaining agents $i \in (N\setminus R) \setminus H$ (who rank $x_i$ at some position $\ell > n^{1/3}$).
Hence,
\begin{align*}
\sum_{i \in N\setminus R} v_i(x_i) = \sum_{i \in H} v_i(x_i) + \sum_{i \in L} v_i(x_i) \,.
\end{align*}
First consider the agents in $H$. Since $\max_{i \in N} v_{i}^* < n^{-1/3}$, 
\begin{align*}
\sum_{i \in H} v_i(x_i) \leq   \max_{i \in N} v_{i}^* |H| <   n^{-1/3} |H| \,.
\end{align*}
Consider any agent $i \in H$ and any item $j$ that $i$ ranks at some position $\ell \leq n^{1/3}$.  Since $u_i$ is the value of $i$ for the item she ranks at position $n^{1/3}+1$, we clearly have that $v_i(j) \geq \allowbreak u_i = \tilde{v}_i(j) \geq \frac{1}{2}n^{-1}$. 
Note that there exists a partial matching of size $|H|$ according to which {\em all} agents of $H$ are matched to items they rank at the first $n^{1/3}$ positions; e.g., the restriction of $X$ on $H$. 
Since $Y$ maximizes the social welfare for the simulated valuation functions, we get
\begin{align*}
\SW(Y|\vv) \geq \SW(Y | \tilde{\vv}) \geq \frac{1}{2} n^{-1} |H| \,,
\end{align*}
which immediately implies that
\begin{align*}
\sum_{i \in H} v_i(x_i) \leq 2 n^{2/3} \cdot \SW(Y|\vv) \,.
\end{align*}
Finally, consider the agents in $L$, and distinguish the following two cases depending on the size of $L$.
\begin{itemize}[leftmargin=15pt,itemsep=6pt,topsep=6pt,parsep=0pt,partopsep=0pt]
\item $|L| \leq n^{1/3}$. 
Since there are at least $n^{1/3}$ different items within the first $n^{1/3}$ positions of each agent in $L$, by Hall's Theorem, there exists a matching $Y'$ according to which {\em all} agents in $L$ receive such an item, i.e., every $i \in L$ has (simulated) value at least $u_i$ for the item she gets in $Y'$. Combining this with the optimality of $Y$ for the simulated valuation functions and the fact that the latter lower bound the real valuation functions, we have
\begin{align*}
\SW(Y|\vv) &\ge \SW(Y|\tilde{\vv}) \ge \SW(Y'|\tilde{\vv}) \ge \sum_{i \in L} u_i \ge \sum_{i \in L} v_i(x_i) \,,
\end{align*}
where the last inequality follows by the definition of $L$.

\item $|L| > n^{1/3}$. Denote by $S_L$ the $|S_L| = n^{1/3}$ agents with the highest values $u_i$ among all the agents in $L$. We may repeat the above argument for $S_L$ instead of $L$ to get $\SW(Y|\vv)  \ge \sum_{i \in S_L} u_i$. Then,
\begin{align*}
\SW(Y|\vv) \geq  n^{1/3}  \min_{i \in S_L} u_i \geq n^{1/3}  \max_{i \in L \setminus S_L} \! u_i \,.
\end{align*}
On the other hand, we have
\begin{align*}
\sum_{i \in L} v_i(x_i) 
&\leq \sum_{i \in S_L} u_i + (|L|-|S_L|)  \max_{i \in L \setminus S_L}\! u_i \\
&\leq \sum_{i \in S_L} u_i + n  \max_{i \in L \setminus S_L} \! u_i \\
&\leq (1 + n^{2/3}) \cdot \SW(Y|\vv).
\end{align*}
\end{itemize}
Therefore, the distortion of the algorithm is at most $16n^{2/3} + 1$ in case 2. Together with case 1, we obtain the desired bound of $O(n^{2/3}\sqrt{\log{n}})$.
\end{proof}

\section{A General Framework for $\lambda$-TSF}\label{sec:extensions}

In this section we generalize $\lambda$-TSF, our algorithm from \cref{sec:upper}, to work for a much broader class of problems, where we are given the ordinal preferences of the agents and access via queries to their cardinal values. We begin with the following general \emph{full information} problem of maximizing an additive objective over a family of combinatorial structures defined on a weighted graph: 

\medskip

\noindent
\textbf{Max-on-Graphs}: Given a (directed or undirected) weighted graph $G = (U, E, w)$ and a concise description of the set $\mathcal{F}\subseteq 2^E$ of feasible solutions, find a solution $S \in \allowbreak \arg\max_{T\in\mathcal{F}} \sum_{e\in T} w(e)$.

\medskip
\noindent
Note that one-sided matching is a special case; $G$ is the complete bipartite graph on $N$ and $A$, the weight of an edge $\{i,j\}$ is $v_i(j)$, and $\mathcal{F}$ contains  the perfect matchings of $G$.

What we are really interested in is the social choice analog of Max-on-Graphs where the weights (defined in terms of the  valuation functions of the agents) are not given! Instead, we know the ordinal preferences of each agent/node for other nodes (corresponding to items or other agents).

\medskip

\noindent\textbf{Ordinal-Max-on-Graphs}:
Here $U = N \cup A$, where $N$ is the set of agents and $A$ is the (possibly empty) set of items; when $A\neq \emptyset$, we assume that $G$ is a bipartite graph with independent sets $N, A$.
Although $G = (U, E)$ is given without the weights, it is assumed that for every $i\in N$ there exists a (private) valuation function $v_i: U \rightarrow \RR_{\geq 0}$, so that 
\vspace{-2pt}\begin{align}\label{eq:weights}
w(e) =
\begin{cases}
v_i(j), & \text{ if } i\in N, j\in A \text{ and } e = \{i,j\}\hspace{12pt} \text{{\footnotesize (Bipartite agent--item case)}} \\
v_i(j) + v_j(i), & \text{ if } i, j\in N \text{ and } e = \{i,j\} \qquad\quad \text{{\footnotesize (Undirected case)}} \\
v_i(j), & \text{ if } i, j\in N \text{ and } e = (i,j) \qquad\quad \text{{\footnotesize\ (Directed case)}} \,.
\end{cases}
\end{align}

\noindent
We are also given the {\em ordinal profile} $\succ_\vv = (\succ_i)_{i \in N}$ induced by $\vv = (v_i)_{i \in N}$ and a concise description of the set $\mathcal{F}\subseteq 2^E$ of feasible solutions. The goal is again to find  $S \in \arg\max_{T\in\mathcal{F}} \sum_{e\in T} w(e)$. 

\medskip

Notice that for Ordinal-Max-on-Graphs to make sense, $\mathcal{F}$ should be independent of $w$. For example, if only sets of weight exactly $B$ are feasible, then it is impossible to find even one feasible set without the exact cardinal information in our disposal. Still, it is  clear that the above algorithmic problem is very general and captures a huge number of maximization problems on graphs. Of course, not all such problems have a natural interpretation where the vertices are agents with preferences. Before we state the main result of this section, we give several examples that have been studied in the computational social choice literature. 

\medskip

\noindent\emph{\textbf{General Graph Matching}}: Given an undirected weighted graph $G = (U, E, w)$, find a matching of maximum weight, i.e., $\mathcal{F}$ contains the matchings of $G$.  
In its social choice analog, $U=N$ and $w(\cdot)$ is defined according to the second branch of \eqref{eq:weights}.
A special case of this problem in which $G = (U_1\cup U_2, E, w)$ is a bipartite graph, is the celebrated \textbf{\emph{two-sided matching}} problem \citep{GaleShapley62,RothSotomayor92}. 

\medskip

\noindent\emph{\textbf{Two-sided Perfect Matching}}: A variant of two-sided matching, where $|U_1|=|U_2|$ and only perfect matchings are feasible. \medskip

\noindent\emph{\textbf{Max $k$-Sum Clustering}}: Given an undirected weighted graph $G = (U, E, w)$, where $|U|$ is a multiple of $k$, partition $U$ into $k$ equal-sized clusters in order to maximize the weight of the edges inside the clusters. That is, $\mathcal{F}$ contains, for each partition of $U$ into $k$ equal-sized clusters, the set of edges that do not cross clusters. This problem generalizes two-sided perfect matching; see \citep{anshelevich2016blind}. In its social choice analog, $U=N$ and the weights are defined according to the second branch of \eqref{eq:weights}.

\medskip

\noindent\emph{\textbf{General Resource Allocation}}: Given a bipartite weighted graph $G = (U_1\cup U_2, E, w)$, assign each node of $U_2$ to (only) one neighboring node in $U_1$ so that the total value of the corresponding edges is maximized. There may be additional combinatorial constraints on this assignment, e.g., no more than $\beta_i$ nodes of $U_2$ may be assigned to node $i\in U_1$. That is, $\mathcal{F}$ contains the sets of edges that define the partitions of $U_2$ into $|U_1|$ parts that also satisfy the additional constraints. This problem generalizes one-sided matching. In its social choice analog, $U_1=N$, $U_2=A$ and $w(\cdot)$ is defined according to the first branch of \eqref{eq:weights}.

\medskip

\noindent\emph{\textbf{Clearing Kidney $\ell$-Exchanges}}: Given a directed weighted graph $G = (U, E, w)$, find a collection of vertex-disjoint cycles of length at most $\ell$ so that their total weight is maximized;
see \citep{abraham2007clearing}. 
Here, $\mathcal{F}$ contains the edge set of any such collection of short cycles. 
In its social choice analog, $U=N$ and $w(\cdot)$ is defined according to the third branch of \eqref{eq:weights}.

\medskip

We use a variant of $\lambda$-TSF, $(\lambda,\mathcal{A})$-TSF, that takes as an additional input an approximation algorithm $\mathcal{A}$ for the problem at hand. 
There are two main differences from $\lambda$-TSF. The simpler one is about the last step; instead of computing a maximum matching, $\mathcal{A}$ is used to compute an (approximately) optimal solution  with respect to the simulated valuation functions. The other difference is more subtle. Now we do not want to ask each agent $i$ for her top  element of $U$, but rather for her top element that induces an edge included in some feasible solution. It is not always trivial to find this element for each agent, but often it can be done in polynomial time; see \cref{cor:binary-general1} for such examples. 

\medskip

\begin{mdframed}[backgroundcolor=white!90!gray,
	leftmargin=\dimexpr\leftmargin-20pt\relax,
	rightmargin=\dimexpr\rightmargin+5pt\relax
	skipabove=5pt,skipbelow=5pt]
\hspace{-3pt}$(\lambda,\mathcal{A})$-TSF \\[-5pt]

\noindent Let $\alpha_\ell = r^{-\ell/(\lambda+1)}$ for $\ell \in \{0, \ldots, \lambda\}$, where $r = \max_{T\in\mathcal{F}} |T|$.\\[-5pt]

\noindent For every  $i \in N$:
\begin{itemize}[itemsep=6pt,topsep=6pt,parsep=0pt,partopsep=0pt]
\item Find $i$'s highest-ranked element in $U$, $j_i^*$, that defines an edge contained in \emph{some} feasible solution. That is, $j_i^*$ is such that there exist $T\in \mathcal{F}$ such that $\{i, j_i^*\}\in T$ (or $(i, j_i^*)\in T$ in the directed case) and any other element with this property is less preferred for $i$.
\item Learn the value $v_i^*$ of $i$ for $j_i^*$, and let $Q_{i,0} = \{j_i^*\}$, $\tilde{v}_i(j_i^*) = \alpha_0 \cdot v_i^* = v_i^*$. (We assume that when $A\neq \emptyset$, $v_i(j) = 0$ for all $j\in N$).
\item For every $\ell \in \{1, \ldots, \lambda\}$, using binary search, compute the set 
\begin{align*}
Q_{i,\ell} = \left\{j \in A: v_{i}(j) \in \left[ \alpha_\ell \cdot v_i^*, \alpha_{\ell-1} \cdot v_i^* \right) \right\}
\end{align*}
and let $\tilde{v}_{i}(j) = \alpha_\ell \cdot v_i^*$ for every $j \in Q_{i,\ell}$.
\item Let $Q_i = \bigcup_{\ell = 0} ^{\lambda} Q_{i,\ell}$ and set $\tilde{v}_{i}(j) = 0$  for every item $j \in A \setminus Q_i$.
\end{itemize}

\noindent Using \eqref{eq:weights} and the simulated valuation functions $\tilde{v_i}, i\in N$, compute the \emph{simulated edge weights} $\tilde{w}(e), e\in E$. \\[-5pt]

\noindent Return $\mathcal{A}(\tilde{G})$, where $\tilde{G} = (U, E, \tilde{w})$. 
\end{mdframed}

\smallskip

Note that the step of finding the $j_i^*$s is not given explicitly as it has to be adjusted for the particular problem at hand. As a concrete non-trivial example, consider the perfect matching variant of general graph matching, where we only care about \emph{perfect} matchings. In this case, we can check whether an edge $\{i,j\}$ belongs to a perfect matching by removing both $i$ and $j$ and then running the blossom algorithm of \citet{edmonds_1965} on the remaining graph (with all weights set to $1$). So, by repeatedly using this subroutine for an agent $i$ starting from her top element and going down her preference list, we can find $j_i^*$ in polynomial time and then make a query for it.

For the following theorem, we assume that the optimization problem $\Pi$ is a special case of Max-on-Graphs with $\max_{T\in\mathcal{F}} |T| = r$. The parameter $r$ allows for a more refined statement; while for the general Max-on-Graphs $r$ may be $\Theta(|U|^2)$, in most cases it is only $O(|U|)$.  We further assume that we can efficiently check whether an edge $e$ belongs to a feasible solution. If not, the distortion guarantee of the theorem is still true, but there is no guarantee about the running time of $(\lambda,\mathcal{A})$-TSF.

\begin{theorem}\label{thm:binary-general}
Suppose $\Pi$ is as described above. If $\mathcal{A}$ is a (polynomial-time) $\rho$-approximation algorithm for $\Pi$ in the full information setting, then 
$(\lambda,\mathcal{A})$-TSF asks $1 + \lambda + \lambda\log{r}$ queries and achieves distortion at most $3 \rho \, r^{\frac{1}{\lambda+1}}$ for the social choice analog of $\,\Pi$ (in polynomial time).
\end{theorem}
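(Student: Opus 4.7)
The plan is to generalize the proof of \cref{thm:lambdaTSFdistortion} by rewriting the objective as a sum of per-agent contributions along incident edges, and then applying the same ``inside $Q_i$ vs.\ outside $Q_i$'' split. Let $X$ be an optimal solution for the true weights $w$, let $Y = \mathcal{A}(\tilde{G})$ be the output of the algorithm, and let $X^* \in \arg\max_{T\in \mathcal{F}} \tilde{w}(T)$ be an optimal solution for the simulated weights. Using \eqref{eq:weights} in all three cases (bipartite agent--item, undirected, directed), one can write any edge weight $w(e)$ as $\sum_{i \in N : i \in e} v_i(\pi_e(i))$, where $\pi_e(i)$ denotes the ``other endpoint'' of $e$ (and for the bipartite case there is a single agent endpoint). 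Setting $S_i = \{\pi_e(i) : e \in X, i \in e\}$, we therefore get $w(X) = \sum_{i\in N} \sum_{j\in S_i} v_i(j)$, and the key structural observation is that $\sum_i |S_i| \le 2|X| \le 2r$, since each feasible edge has at most two agent endpoints.

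Now split each inner sum according to whether $j \in Q_i$ or not. For $j \notin Q_i$ the binary search guarantees $v_i(j) < \alpha_\lambda v_i^*$, so the ``outside'' contribution is at most $\sum_i |S_i| \alpha_\lambda v_i^* \le 2r\, \alpha_\lambda \max_i v_i^*$. The edge $\{i,j_i^*\}$ (or $(i,j_i^*)$) by construction lies in some feasible solution and has simulated weight at least $v_i^*$, hence $\tilde w(X^*) \ge \max_i v_i^*$. Combining with $\tilde w(Y) \ge \tilde w(X^*)/\rho$ (since $\mathcal{A}$ is a $\rho$-approximation on $\tilde G$) and $\tilde w(Y) \le w(Y)$ (since $\tilde v_i \le v_i$ pointwise and hence $\tilde w \le w$), we obtain $\max_i v_i^* \le \rho\, w(Y)$, and therefore the outside piece is bounded by $2\rho r\,\alpha_\lambda\, w(Y)$. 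For $j \in Q_{i,\ell}$ we have the pointwise ratio bound $v_i(j) \le (\alpha_{\ell-1}/\alpha_\ell)\,\tilde v_i(j) = \alpha_1^{-1}\,\tilde v_i(j)$ (with the $\ell=0$ case holding trivially since $v_i(j_i^*) = \tilde v_i(j_i^*)$). Summing, the ``inside'' contribution is at most $\alpha_1^{-1}\tilde w(X) \le \alpha_1^{-1}\tilde w(X^*) \le \alpha_1^{-1}\rho\,\tilde w(Y) \le \alpha_1^{-1}\rho\, w(Y)$. Adding both pieces and plugging in $\alpha_\ell = r^{-\ell/(\lambda+1)}$ yields $w(X) \le \rho(2r\,\alpha_\lambda + \alpha_1^{-1})\,w(Y) = 3\rho\, r^{1/(\lambda+1)}\, w(Y)$, which is the claimed distortion bound.

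For the query count, each agent is queried once for $j_i^*$, and then for each of the $\lambda$ levels we perform a binary search for the boundary index in the agent's preference list, restricted to the top $r$ elements (elements ranked lower than position $r$ cannot appear in any feasible solution, since $r = \max_{T\in\mathcal{F}}|T|$). Each such search costs at most $\log r + 1$ queries, for a total of $1 + \lambda + \lambda \log r$. Polynomial running time follows from the polynomial-time assumption on $\mathcal{A}$ plus the standing assumption that edge-feasibility is checkable in polynomial time (needed both for identifying $j_i^*$ and for restricting the binary search to relevant elements). The main subtlety of the argument, compared to \cref{thm:lambdaTSFdistortion}, is the careful handling of the asymmetry between $X$ (optimal for $w$) and $Y$ (an approximately optimal solution for $\tilde w$): we must pass through $X^*$ twice, losing a factor of $\rho$ on both the outside and inside contributions, and we must uniformly bound $\sum_i |S_i|$ by $2r$ so that the bound scales with $r$ rather than with the number of vertices. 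Once these two points are in place, the rest of the proof is a direct adaptation of the one-sided matching argument.
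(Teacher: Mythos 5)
Your proof is correct and follows essentially the same route as the paper's: the same decomposition of the optimum into edges inside versus outside $Q_i$, the same bound $\sum_i |S_i| \le 2|X| \le 2r$, the same per-level ratio $v_i(j) \le \alpha_1^{-1}\tilde v_i(j)$, and the same two-step comparison through the simulated-optimal solution $X^*$, yielding $2\rho r\alpha_\lambda + \alpha_1^{-1}\rho = 3\rho\,r^{1/(\lambda+1)}$. The only genuine slip is in your justification of the $\lambda\log r$ query count: the claim that ``elements ranked lower than position $r$ cannot appear in any feasible solution, since $r = \max_{T\in\mathcal{F}}|T|$'' is false in general ($r$ bounds the cardinality of a feasible set, not the rank of elements that may participate in one --- e.g.\ in general graph matching on $K_n$ we have $r = n/2$ but every neighbor, including the lowest-ranked, can be matched), so the binary searches in fact run over the full preference list, costing $O(\log|U|)$ per level; the paper itself is silent on this and its later corollary uses $\log|U|$, so the discrepancy is minor, but your stated reason does not hold.
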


\begin{proof}
Let $X \subseteq E$ be an optimal solution according to the valuation functions $v_i$, and $Z$ be the solution returned by $(\lambda,\mathcal{A})$-TSF. Also, let $Y$ be an  optimal solution with respect to the simulated valuation functions.

In order to unify the notation for the three definitions of edge weights in (1), we write $e = \langle i, j\rangle$ to mean
\begin{itemize}[itemsep=6pt,topsep=6pt,parsep=0pt,partopsep=0pt]
\item[(i)] $e = \{i, j\}$ with  $i\in N, j\in A$ when $A\neq\emptyset$;
\item[(ii)] $e = \{i, j\}$ when $A=\emptyset$ and $G$ is undirected; 
\item[(iii)] $e = (i, j)$ when $A=\emptyset$ and $G$ is directed. 
\end{itemize}
Using this notation we can define $X_i  = \{j\in U\,:\, \langle i, j\rangle\in X\}$, for $i\in N$, and write the optimum as
\begin{align*}
\sum_{e\in X} w(e) &= \sum_{\langle i, j\rangle\in X} \!v_i(j) 
= \sum_{i \in N} \sum_{j\in X_i\setminus Q_i} \!v_i(j) + \sum_{i \in N} \sum_{j\in X_i\cap Q_i} \!v_i(j) \,.
\end{align*}
We will bound the two terms separately. We begin with the first one:
\begin{align*}
\sum_{i \in N} \sum_{j\in X_i\setminus Q_i} \!v_i(j)
&< {\alpha_{\lambda}} \sum_{i \in N} \sum_{j\in X_i\setminus Q_i} \!v_i^*  
\le {\alpha_{\lambda}} \!\sum_{\langle i, j\rangle\in X} \!v_i^* 
\leq 2 r {\alpha_{\lambda}}  \max_{k \in N} v_k^* 
\leq 2 r {\alpha_{\lambda}} \rho \sum_{e\in Z} w(e) \,.
\end{align*}
The first inequality follows directly by the definition of $Q_i$. The second inequality follows by extending the scope of the summation to include all (possibly unordered) pairs in $X$. For the third inequality, it suffices to notice that we simultaneously upper bound the number of terms in the summation by $2 |X| \le 2r$ and each $v_i^*$ by their maximum. 
Finally, the last inequality follows by the fact that the optimal value with the simulated valuation functions is at least $\max_{k \in N} v_k^*$ and, thus, the solution $Z$ returned by the algorithm achieves at least a $\rho$-approximation of that.

For the second term we have
\begin{align*}
\sum_{i \in N} \sum_{j \in X_i\cap Q_i} \!v_{i}(j)
= \sum_{i \in N} \sum_{\ell = 0}^k \sum_{j \in X_i \cap Q_{i,\ell}} \!\!v_{i}(j) \,.
\end{align*}
Now, let us assume that $\lambda>0$; we will deal with the simpler case where $\lambda=0$ later.
By definition, for any $\ell \in \{1, \ldots, \lambda\}$ and any $j \in Q_{i,\ell}$, we have that $v_{i}(j) \leq \alpha_{\ell-1} \cdot v_i^* = \frac{\alpha_{\ell-1}}{\alpha_{\ell}} \cdot \alpha_\ell \cdot v_i^* =  \tilde{v}_{i}(j) / \alpha_1$.  Also, for $Q_{i,0} = \{j_i^*\}$, we have $v_{i}(j_i^*) = \tilde{v}_{i}(j_i^*) \leq \tilde{v}_{i}(j_i^*) / \alpha_1$. Hence,

\begin{align}
\sum_{i \in N} \sum_{j \in X_i\cap Q_i} \!v_{i}(j)
&\leq \alpha_1^{-1} \sum_{i \in N} \sum_{j \in X_i \cap Q_i} \!\tilde{v}_{i}(j)
= \alpha_1^{-1} \sum_{i \in N} \sum_{j \in X_i \cap Q_i} \!\tilde{v}_i(j)  + \alpha_1^{-1} \sum_{i \in N} \sum_{j \in X_i \setminus Q_i} \!0 \nonumber \\
&= \alpha_1^{-1} \!\!\sum_{\langle i, j\rangle\in X} \!\tilde{v}_i(j) = \alpha_1^{-1} \sum_{e\in X} \tilde{w}(e)
\le \alpha_1^{-1} \sum_{e\in Y} \tilde{w}(e) \le  \alpha_1^{-1} \rho \sum_{e\in Z} \tilde{w}(e)\nonumber \\
 &\le  \alpha_1^{-1} \rho \sum_{e\in Z} {w}(e) \,. \label{eq:alpha1}
\end{align}
The second inequality follows from the optimality of $Y$ with respect to the simulated valuation functions. The third inequality follows directly from the approximation guarantee of $\mathcal{A}$: $Z$ attains a $\rho$ approximation of the value achieved by $Y$. Finally, the last inequality follows from the fact that $\tilde{v}_{i}(j) \leq v_{i}(j)$ for every $i, j$, and thus $\tilde{w}(e) \leq w(e)$ for all $e\in E$.

Now we can put everything together:
\begin{align}
\sum_{e\in X} w(e) &\leq \left(2 r {\alpha_{\lambda}} \rho  +  \alpha_1^{-1} \rho\right) \sum_{e\in Z} w(e) 
 = 3\rho r^{\frac{1}{\lambda+1}} \sum_{e\in Z} w(e) \,, \label{eq:rho_dist}
\end{align}
and this settles the bound on the distortion of $(\lambda,\mathcal{A})$-TSF when $\lambda>0$. 

When $\lambda=0$, we can repeat the derivation of \eqref{eq:alpha1} but without the factor of $\alpha_1^{-1}$, as this is only needed for the simulated value of items in $Q_{i,\ell}$ for $\ell>0$ and now these sets are empty. So, $\sum_{i \in N} \sum_{j \in X_i\cap Q_i} \!v_{i}(j) \le  \rho \sum_{e\in Z} {w}(e)$ and then the analog of \eqref{eq:rho_dist} is 
\begin{align*}
\sum_{e\in X} w(e) \leq \left(2 r {\alpha_{0}} \rho  +   \rho\right) \sum_{e\in Z} w(e) \le 3\rho r^{\frac{1}{0+1}} \sum_{e\in Z} w(e) \,.
\end{align*}

About the running time, it is easy to see that all steps, except running $\mathcal{A}(\tilde{G})$ and finding $j_i^*$ for each $i\in N$, can be done in polynomial time (in particular $O(|U|\log^2|U|)$). Let $t(|U|)$ be the time needed to  check whether an edge can be extended to a feasible solution. Then, finding all the $j_i^*$s can be done in  time$O(|U|^2 t(|U|))$, as we need to perform the check for at most $|U|-1$ elements per $i\in N$. Hence, if both the feasibility check and $\mathcal{A}$ run in polynomial time, then $(\lambda,\mathcal{A})$-TSF runs in polynomial time as well.
\end{proof}

For the problems defined above, we can get the following. 

\begin{corollary}\label{cor:binary-general1}
By choosing $\mathcal{A}$ appropriately, $(\lambda,\mathcal{A})$-TSF  asks at most $1 + \lambda + \lambda\log{|U|}$ queries and achieves distortion at most
\begin{itemize}[leftmargin=15pt,itemsep=6pt,topsep=6pt,parsep=0pt,partopsep=0pt]
\item  $3\big( \frac{|U|}{2}\big)^{\frac{1}{\lambda+1}}$ in polynomial time for
one-sided matching (thus qualitatively retrieving \cref{thm:lambdaTSFdistortion}), two-sided matching, general graph matching, and two-sided perfect matching,
\item $3 |U_2|^{\frac{1}{\lambda+1}}$ for general resource allocation;
\item $3 |U|^{\frac{2}{\lambda+1}}$ for max $k$-sum clustering;
\item $(4.5 + \varepsilon) |U|^{\frac{1}{\lambda+1}}$  for clearing kidney $3$-exchanges and $(7 + \varepsilon) |U|^{\frac{1}{\lambda+1}}$ for clearing kidney $4$-exchanges in polynomial time, for any constant $\varepsilon\in(0,1)$.
\end{itemize}
\end{corollary}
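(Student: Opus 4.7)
The plan is to verify the corollary item by item by instantiating Theorem \ref{thm:binary-general}: for each problem I pick a (concrete) approximation algorithm $\mathcal{A}$ with ratio $\rho$, bound $r = \max_{T\in\mathcal{F}}|T|$ in terms of $|U|$ (so that $1+\lambda+\lambda\log r \leq 1+\lambda+\lambda\log|U|$), and plug these into the $3\rho\, r^{1/(\lambda+1)}$ bound. The only extra thing to check beyond this formula is the runtime claim, which requires the feasibility test used to locate each $j_i^*$ (the top-ranked element that lies in some feasible set) as well as $\mathcal{A}$ itself to be polynomial-time.

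For the matching-type problems (one-sided matching, two-sided matching, general graph matching, two-sided perfect matching), any feasible set is a matching, so $r \leq \lfloor|U|/2\rfloor$, giving the factor $(|U|/2)^{1/(\lambda+1)}$. One can take $\mathcal{A}$ to be the Hungarian algorithm in the bipartite cases and Edmonds' blossom algorithm in the general case, so $\rho=1$. The feasibility test is trivial for unrestricted matching (every edge belongs to some matching); for two-sided \emph{perfect} matching it is the subroutine already sketched in the paper just before \cref{thm:binary-general} (delete $i$ and $j$ and run the blossom algorithm on the unweighted remainder), which is polynomial. For general resource allocation, each feasible assignment picks one edge per vertex in $U_2$, so $r = |U_2|$; taking $\mathcal{A}$ to be any exact solver (no polynomial-time claim is made) gives $\rho=1$ and yields $3|U_2|^{1/(\lambda+1)}$.

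For max $k$-sum clustering, a feasible solution consists of $k$ equal-sized clusters with $\binom{|U|/k}{2}$ internal edges each, so $r = k\binom{|U|/k}{2} \leq |U|^2/2$, which when substituted into $r^{1/(\lambda+1)}$ gives the claimed $|U|^{2/(\lambda+1)}$ (absorbing the constant into the factor $3$); any exact solver plays the role of $\mathcal{A}$. For clearing kidney $\ell$-exchanges, the feasible sets are edge sets of vertex-disjoint short-cycle collections, so $r \leq |U|$. Here I invoke the known polynomial-time approximation algorithms for bounded-length cycle cover: a $(3/2+\varepsilon/3)$-approximation for $\ell=3$ and a $(7/3+\varepsilon/3)$-approximation for $\ell=4$ (see, e.g., the cycle-cover literature cited around \citep{abraham2007clearing}), which after multiplication by the factor $3$ in \cref{thm:binary-general} yield the stated $(4.5+\varepsilon)|U|^{1/(\lambda+1)}$ and $(7+\varepsilon)|U|^{1/(\lambda+1)}$ bounds. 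Finding $j_i^*$ for agent $i$ reduces to checking, for the candidate partners in order of $i$'s preference, whether the corresponding edge lies on \emph{any} directed cycle of length at most $\ell$ through $i$; this can be done in polynomial time by a short BFS.

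The only genuinely nontrivial point, and what I would expect to be the main obstacle to write out carefully, is the polynomial-time feasibility check used to identify $j_i^*$ on a per-problem basis: it is immediate for plain matchings, resource allocation and clustering, but for the perfect-matching variant and for the kidney-exchange variants it requires invoking a separate combinatorial subroutine (blossom / short-cycle detection). Once these checks are in place and $\mathcal{A}$ is specified as above, the distortion guarantees follow directly from \cref{thm:binary-general} and the query bound $1+\lambda+\lambda\log r \leq 1+\lambda+\lambda\log|U|$ is automatic.
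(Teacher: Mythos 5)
Your proposal is correct and takes essentially the same approach as the paper's proof: instantiate \cref{thm:binary-general} for each problem by bounding $r$ (matching size $\le |U|/2$, assignment size $=|U_2|$, clustering edges $\le |U|^2$, short-cycle cover $\le |U|$), selecting an exact or approximation algorithm $\mathcal{A}$ to fix $\rho$, and verifying the per-problem feasibility test for locating $j_i^*$. The only slight omission is that the paper pins down the kidney-exchange approximation algorithm concretely (the $(1.5+\delta)$- and $(7/3+\delta)$-approximations of \citet{JiaTWZ17} with $\delta = \varepsilon/3$), whereas you gesture to the cycle-cover literature generically; and your tighter bound $r \le |U|^2/2$ for clustering needs no "absorbing into the constant $3$" since $(|U|^2/2)^{1/(\lambda+1)} \le |U|^{2/(\lambda+1)}$ already.
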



\begin{proof}
We begin with one-sided matching, two-sided matching, and general graph matching. First notice that the size of any matching is at most $|U|/2$ and thus $r\le |U|/2$. Then, by using an exact algorithm  $\mathcal{A}$ for computing maximum weight matchings, such as the blossom algorithm \citep{edmonds_1965}, \cref{thm:binary-general} directly implies distortion at most $3(|U|/2)^{\frac{1}{\lambda+1}}$.
Regarding the running time, observe that any edge is already a feasible solution, and thus $j_i^*$ is indeed $i$'s most preferred alternative. Since the blossom algorithm runs in polynomial time, we get that $(\lambda,\mathcal{A})$-TSF runs in polynomial time as well.

For two-sided perfect matching the argument is as above but one needs to argue about efficiently checking whether a given  edge extends to a feasible solution. This, however, is already discussed right after the description of  $(\lambda,\mathcal{A})$-TSF.

For general resource allocation, we only need to see that an assignment is fully determined by exactly $|U_2|$ edges matching the items to the agents. That is, $r = |U_2|$. Since we do not deal with the running time in this case, we may assume an algorithm $\calA$ that solves the full-information problem optimally. Then, \cref{thm:binary-general} implies distortion at most  $3 |U_2|^{\frac{1}{\lambda+1}}$. It should be noted here that, depending on the additional constraints imposed by $\mathcal{F}$, the computation of an assignment may vary from easy (e.g., no constraints) to strongly NP-hard (e.g., the items assigned to each agent should form an independent set in a given graph $H$ on $U_2$). 

For max $k$-sum clustering, again we do not deal with the running time. Thus, it suffices to use $|U|^2$ as a straightforward upper bound for $r$ and the distortion bound follows. 

Finally, for clearing kidney $\ell$-exchanges, $\ell\in\{3,4\}$, notice that the number of edges defining a collection of disjoint cycles can be at most $|U|$ and thus $r\le |U|$. Fix a constant $\varepsilon > 0$. 
For clearing kidney $3$-exchanges (resp.~$4$-exchanges),
as $\calA$  we use the polynomial-time $(1.5+\delta)$-approximation (resp.~$(7/3+\delta)$-approximation) algorithm of \citet{JiaTWZ17} with $\delta = \varepsilon/3$. Thus, \cref{thm:binary-general} implies distortion at most  
\[3(1.5+\varepsilon/3)|U|^{\frac{1}{\lambda+1}} = (4.5+\varepsilon)|U|^{\frac{1}{\lambda+1}}\,,\]
for $3$-exchanges and 
\[3(7/3+\varepsilon/3)|U|^{\frac{1}{\lambda+1}} = (7+\varepsilon)|U|^{\frac{1}{\lambda+1}}\,,\]
for $4$-exchanges. Note that the problem is NP-hard, even when $k=3$ \citep{abraham2007clearing}. 
Regarding the running time, we can efficiently check whether an edge $(i,j)$ is in a feasible solution, as it is equivalent to checking whether $(i,j)$ belongs to a cycle of length at most $\ell$. For instance, we may find a shortest $(j,i)$-path in the unweighted version of $G$; $(i,j)$ is in a feasible solution if and only if the length of this shortest path is at most $\ell - 1$.
Since $\calA$ also runs in polynomial time, we get that $(\lambda,\mathcal{A})$-TSF runs in polynomial time as well.
\end{proof}

\section{Conclusion and Open Problems}\label{sec:open}

Our work is the first to study the interplay between elicited information and distortion in one-sided matching, as well as more general graph problems. We have shown several tradeoffs, both in term of possible distortion guarantees, and inapproximability bounds. Our results suggest that using only a small number of queries per agent can lead to significant improvements on the distortion. 

As future directions, first it would be very interesting to see if we can come up with algorithms that match the lower bounds of \cref{thm:lower-unconstrained}. We managed to do that for the class of $k$-well-structured instances, but whether it is possible to achieve that for any instance remains to be seen. Perhaps a slightly less ambitious open problem would be to design an algorithm that outperforms the two-queries algorithm presented in \cref{sec:unit-sum} in terms of the achievable tradeoffs, for agents with unit-sum valuation functions. Another interesting avenue would be to consider randomized algorithms, either in the selection of the matching, or the process of querying the agents, and see if we can obtain significant improvements. Finally, going beyond one-sided matching, one could study more general programs, such as those discussed in \cref{sec:extensions} and design tailor-made algorithms with improved tradeoffs between distortion and number of queries per agent.


\bibliographystyle{plainnat}
\bibliography{references}

\end{document}